\begin{document}

\author{Tomasz Brengos \thanks{This work has been supported by the grant of Warsaw University of Technology no. 504M for young researchers.
}}
\title{On coalgebras with internal moves}

\authorrunning{T. Brengos}

\institute{Faculty of Mathematics and Information Science\\
         Warsaw University of Technology\\ 
         Koszykowa 75 \\       
         00-662 Warszawa, Poland \\ \email{t.brengos@mini.pw.edu.pl}}

\maketitle

\begin{abstract}
 In the first part of the paper we recall the coalgebraic approach to handling the so-called invisible transitions that appear in different state-based systems semantics. We claim that these transitions are always part of the unit of a certain monad. Hence, coalgebras with internal moves are exactly coalgebras over a monadic type. The rest of the paper is devoted to supporting our claim by studying two important behavioural equivalences for state-based systems with internal moves, namely: weak bisimulation and trace semantics.
 
We continue our research on weak bisimulations for coalgebras over order enriched monads. The key notions used in this paper and proposed by us in our previous work are the notions of an order saturation monad and a saturator.  A saturator operator can be intuitively understood as a reflexive, transitive closure operator.  There are two approaches towards defining saturators for coalgebras with internal moves. Here, we give necessary conditions for them to yield the same notion of weak bisimulation.

Finally, we propose a definition of trace semantics for coalgebras with silent moves via a  uniform fixed point operator. We compare strong and weak bisimilation together with trace semantics for coalgebras with internal steps.
\end{abstract}
\keywords{bisimulation, coalgebra, Conway operator, epsilon transition, fixed point operator, internal transition, logic, monad, saturation, trace, trace semantics, traced monoidal category,  uniform fixed point operator,  weak bisimulation, weak trace semantics, van Glabbeek spectrum}

\section{Introduction}
In recent years we have witnessed a rapid development of the theory of coalgebras as a unifying theory for state-based systems \cite{GummEl,HasJacSok,JacSilSok,Rutten}. Coalgebras to some extent are one-step entities in their nature. They can be thought of and understood as a representation of a single step of visible computation of a given process.  Yet, for many state-based systems it is useful to consider a part of computation branch that is allowed to take several steps and in some sense remains neutral (invisible) to the structure of the process. For instance, the so-called \emph{$\tau$-transitions} also called \emph{invisible transitions} for labelled transition systems \cite{Milner,Milner3} or \emph{$\varepsilon$-transitions} for non-deterministic automata \cite{HopUll}. As will be witnessed here, these special branches of computation are the same in their nature, yet they are used in order to develop different notions of equivalence of processes, e.g. weak bisimulation for LTS \cite{Milner} or trace semantics for non-deterministic automata with $\varepsilon$-moves, we call $\varepsilon$-NA in short \cite{HopUll}. These are not the only state-based systems  considered in the literature with a special invisible computational branch. Fully probabilistic systems \cite{BaierHermanns} or Segala systems \cite{Segala,SegalaThesis} are among those, to name a few. All these systems are instances of a general notion of a coalgebra. If so, then how should we consider the invisible part of computation coalgebraically? As we will see further on, the invisible part of the computation can be and should be, in our opinion, considered as part of the unit of a monad. Before we state basic results let us summarize known literature 
on the topic of invisible transitions from perspective of weak bisimulation, trace semantics and coalgebra.
\subsubsection{Weak bisimulation}
The notion of a strong bisimulation for different transition systems plays an important role in theoretical computer science. A weak bisimulation is a relaxation of this notion by allowing silent, unobservable transitions. Here, we focus on the weak bisimulation and weak bisimilarity proposed by R.~Milner \cite{Milner,Milner3} (see also \cite{Sangiorgi11}). Analogues of Milner's weak bisimulation are established for different deterministic and probabilistic transition systems (e.g. \cite{BaierHermanns,Sangiorgi11,Segala,SegalaThesis}). It is well known that one can introduce Milner's weak bisimulation for LTS in several different but equivalent ways. 

The notion of a strong bisimulation, unlike the weak bisimulation, has been well captured coalgebraically (see e.g. \cite{GummEl,Rutt2000,Staton}).  Different approaches to defining weak bisimulations for coalgebras have been presented in the literature. The earliest paper is \cite{Rutten}, where the author studies weak bisimulations for while programs. In \cite{Rothe} the author introduces a definition of weak bisimulation for coalgebras by translating a coalgebraic structure into an LTS. This construction works for coalgebras over a large class of functors but does not cover the distribution functor, hence it is not applicable to different types of probabilistic systems. In \cite{RotheMas} weak bisimulations are introduced via weak homomorphisms. As noted in \cite{SokViWor} this construction does not lead to intuitive results for probabilistic systems. In \cite{SokViWor} the authors present a definition of weak bisimulation for classes of coalgebras over functors obtained from bifunctors. Here, weak bisimulation of a system is defined as a strong bisimulation of a transformed system. In \cite{Brengos12} we proposed a new approach to defining weak bisimulation in two different ways. Two definitions of weak bisimulation described by us in \cite{Brengos12} were proposed in the setting of coalgebras over ordered functors.  The key ingredient of the definitions is the notion of a saturator. As noted in \cite{Brengos12} the saturator is sometimes too general to model only weak bisimulation and may be used to define other known equivalences, e.g. delay bisimulation \cite{Sangiorgi11}. Moreover, the saturators from \cite{Brengos12} do not arise in any natural way. To deal with this problem we have presented a canonical way to consider weak bisimulation saturation in our previous paper \cite{Brengos13}. Part of the results from \cite{Brengos13} are recalled in this paper. We recall the two procedures for handling the invisible part of computation via monadic structure, the definition and properties of order saturation monad \cite{Brengos13}. What is new here is the definition of weak bisimulation in terms of a kernel bisimulation on a saturated model and the comparison of the two strategies towards saturation from the point of view of weak bisimulation.  

Here, we should also mention \cite{GonPat,MicPer} which appeared almost at the same time as our previous paper \cite{Brengos13}. The former is a talk on the on-going research by S.~Goncharov and D.~Pattinson related to weak bisimulation for coalgebras (at the time of writing this paper we were unable to find any other reference to their work). Their approach is very similar to ours as it uses ordered monads and fixed points for saturations. However, the authors do not hide the invisible steps inside a monadic structure. The latter is a paper in which the authors study weak bisimulation for labelled transition systems weighted over semirings. They propose a coalgebraic approach towards defining  weak bisimulation which relies on $\varepsilon$-elimination procedure presented in \cite{SilWester}.

\subsubsection{Weak trace semantics}
Trace semantics is a standard behavioural equivalence for many state-based systems. 
Generic trace semantics for coalgebras has been proposed in \cite{HasJacSok,JacSilSok}. If $T$ is a monad on a category $\mathsf{C}$ and $F:\mathsf{C}\to\mathsf{C}$ is an endofunctor then the trace semantics of $TF$-coalgebras is final semantics for coalgebras  considered in a different category, namely the Kleisli category for the monad $T$ \cite{HasJacSok,Jacobs08}. It is worth noting that trace semantics can also be defined for $GT$-coalgebras for an endofunctor $G:\mathsf{C}\to\mathsf{C}$ \cite{JacSilSok,SilBonBonRut} via the so-called $\mathcal{EM}$-extension semantics. In our paper however, we focus only on $TF$-coalgebras and do not consider $GT$-coalgebras. Trace semantics can also be defined for different state-based systems with internal, invisible moves. In order to distinguish trace semantics for systems with and  without silent steps  we will sometimes call the former ``weak trace semantics".
One coalgebraic approach towards defining trace semantics for systems with $\varepsilon$-moves (invisible moves) is based on a very simple idea, has been presented in \cite{HasJacSok_jap,SilWester} and can be summarized as follows. In the first step we consider invisible moves as visible. Then we find the trace semantics for an "all-visible-steps" coalgebra and finally, we remove all occurrences of the invisible label and get the desired weak trace semantics.  We discuss this approach in our paper and call it the ``top-down" approach. The term ``top-down" refers to the fact that we somewhat artificially treat the invisible moves as if they were visible  and then we remove their occurrences from the trace. Such an approach does not use any structural properties of silent moves. A dual approach, a ``bottom-up" method, should make use of their structural properties.  Here, we present a ``bottom-up" method for coalgebras with internal steps that treats silent moves as part of the unit of a certain monad. 

\subsubsection{Content and organization of the paper}
The paper is organized as follows. Section \ref{section:introduction} recalls basic notions in category theory, algebra and coalgebra. Section~\ref{subsection:monads} describes two very general methods for dealing with silent steps via a monadic structure that have been proposed in our previous work \cite{Brengos13}. We will see that these two methods appear in classical definitions of a weak bisimulation for LTS's.  In Section~\ref{section:LTS_coalgebraically} we recall the definition of an order saturation monad that comes from \cite{Brengos13} and claim that this object is suitable for defining weak bisimulations for coalgebras.  An order saturation monad is an order enriched monad equipped with an extra operator, a saturator $(-)^{*}$, that assigns to any coalgebra $\alpha:X\to TX$ a coalgebra $\alpha^{*}~:~X\to TX$ and can be thought of as a reflexive, transitive closure operator.  It turns out that in the classical literature on labelled transition systems and weak bisimulation one can find two different saturators yelding the same notion of equivalence. These two saturators are natural consequences of the two stategies towards handling invisible steps via monadic structure.  What is new in this section is the following: 
\begin{itemize}
\item Weak bisimulation is defined as a kernel bisimulation \cite{Staton} on a saturated structure and not via lax- and oplax-homomorphisms in Aczel-Mendler style as it was done in \cite{Brengos13}.  
\item \label{pt2} We present both saturators in a general setting and ask when they yield the same notion of weak bisimulation. We give sufficient conditions functors should satisfy so that weak bisimulation coincides for both approaches. 
\end{itemize}
 In Section~\ref{section:trace_semantics} we discuss a novel approach towards defining trace semantics for coalgebras with internal moves. Here, weak trace semantics morphism is obtained axiomatically by the so-called coalgebraic trace operator, i.e. a uniform fixed point operator. For $\textbf{Cppo}$-enriched monads, a coalgebraic trace operator is given by the least fixed point operator $\mu x.(x\cdot \alpha)$.  Moreover, we show that the coalgebraic trace operator for $\varepsilon$-NA's arises from properties of the so-called free LTS monad. To be more precise,  Kleisli category for the free LTS monad is traced monoidal category in the sense of Joyal et al. \cite{JSV}. In Section \ref{section:weak_and_trace}, in a fairly general setting, we formulate how strong bisimulation, weak bisimulation and weak trace semantics are related. Hence, according to our knowledge we present the first paper that considers a comparison of three different behaviour equivalences in van Glabbeek's spectrum for systems with internal moves \cite{Glabbeek} from coalgebraic perspective.

\section{Basic notions and properties} 
\label{section:introduction}

\subsubsection{Algebras and coalgebras} 
Let $\mathsf{C}$ be a category and let $F \colon \mathsf{C}\rightarrow \mathsf{C}$ be a functor.  An $F$-algebra is a morphism $a:FA\to A$ in $\mathsf{C}$. A \emph{homomorphism} between algebras $a:FA\to A$ and $b:FB\to B$ is a morphism $f:A\to B$ in $\mathsf{C}$ such that $b\circ F(f) = f\circ a$. Dually, an \emph{$F$-coalgebra} is a morphism $\alpha:X\to FX$ in $\mathsf{C}$.  The domain $X$ of $\alpha$ is called \emph{carrier} and the morphism  $\alpha$ is sometimes also called \emph{structure}. A \emph{homomorphism} from an $F$-coalgebra $\alpha:X\to FX$ to an $F$-coalgebra $\beta:Y\to FY$  is a morphism $f \colon X\rightarrow Y$ in $\mathsf{C}$ such that $ F(f)\circ \alpha =  \beta \circ f$. The category of all $F$-coalgebras ($F$-algebras) and homomorphisms between them is denoted by $\mathsf{C}_F$ (resp. $\mathsf{C}^F$). Many transition systems can be captured by the notion of coalgebra. In this paper we  mainly focus on labelled transition systems with a silent label and non-deterministic automata with $\varepsilon$-moves. These  two structures have been defined and thoroughly studied in the computer science literature (see e.g. \cite{HopUll,Milner,Milner3,Sangiorgi11}). Let $\Sigma$ be a fixed set of alphabet letters. A \emph{labelled transition system} over the alphabet $\Sigma_\tau = \Sigma +  \{\tau\}$ (or an \emph{LTS} in short) is a triple $\left<X,\Sigma_\tau,\to\right>$, where $X$ is called a \emph{set of states} and $\to\subseteq X\times \Sigma_\tau\times X$ is a \emph{transition}. The label $\tau$ is considered a special label sometimes called silent or invisible label. For an LTS $\left<X,\Sigma_\tau,\to\right>$ instead of writing $(x,\sigma,x')\in \to$ we write $x\stackrel{\sigma}{\to}x'$. Labelled transition systems can be viewed as coalgebras over the type $\mathcal{P}(\Sigma_\tau\times \mathcal{I}d)$ \cite{Rutten}. 
From coalgebraic perspective, a \emph{non-deterministic automaton with $\varepsilon$- transitions}, or \emph{$\varepsilon$-NA} in short, over alphabet $\Sigma$ is a coalgebra of the type $\mathcal{P}(\Sigma_\varepsilon\times \mathcal{I}d +  1)$, where $1=\{\checked\}$ is  fixed one element set and $\Sigma_\varepsilon = \Sigma +  \{\varepsilon\}$. Note that LTS's differ from $\varepsilon$-NA's in the presence of $1$ in the type.  It is responsible for specifying which states are \emph{final} and which are not. To be more precise for $\varepsilon$-NA $\alpha:X\to \mathcal{P}(\Sigma_\varepsilon \times X +  1)$ we call a state $x\in X$ \emph{final} if $\checked \in \alpha(x)$. For more information on automata the reader is referred to e.g. \cite{HopUll}. 
\subsubsection{Strong bisimulation for coalgebras}
Notions of strong bisimulation have been well captured coalgebraically \cite{AczMen,GummEl,Rutten,Staton}.  Let $F$ be a $\mathsf{Set}$-endofunctor and consider an $F$-coalgebra $\alpha:X\to FX$. In Aczel-Mendler style \cite{AczMen,Staton}, \emph{a (strong) bisimulation} is a relation $R\subseteq X\times X$ for which there is a structure $\gamma:R\to TR$ making $\pi_1:R\to X$ and $\pi_2:R\to X$ homomorphisms between $\gamma$ and $\alpha$. 
In this paper however we consider defining bisimulation as the so-called kernel bisimulation \cite{Staton}. Let $F:\mathsf{C}\to\mathsf{C}$ be an endofunctor on an arbitrary category. Let $\alpha:X\to FX$ and $\beta:Y\to FY$ be $F$-coalgebras. A relation $R$ on $X$ and $Y$ (i.e. a jointly-monic span $X\stackrel{\pi_1}{\leftarrow} R \stackrel{\pi_2}{\rightarrow} Y$ in $\mathsf{C}$) is  \emph{kernel bisimulation} or \emph{bisimulation} in short if there is a coalgebra $\gamma:Z\to FZ$ and homomorphisms $f$ from $\alpha$ to $\gamma$ and $g$ from $\beta$ to $\gamma$ such that   $R$ with $\pi_1$, $\pi_2$ is the pullback of $X\stackrel{f}{\to} Z \stackrel{g}{\leftarrow} Y$. For a thorough study of the relation between Aczel-Mendler style of defining bisimulation and kernel bisimulation the reader is referred to \cite{Staton} for details. 

\subsubsection{Monads} \begin{wrapfigure}[9]{r}{0.15\textwidth}
\vspace{-40pt}
$$
\xymatrix@-1pc{
T^3\ar[d]_{T\mu} \ar[r]^{\mu_T} & T^2\ar[d]^{\mu}  \\
T^2 \ar[r]_{\mu} & T 
}
$$
$$
\xymatrix@-1pc{
T\ar[d]_{\eta_T}\ar[r]^{T\eta}\ar@{=}[dr] & T^2\ar[d]^{\mu} \\
T^2 \ar[r]_\mu & T
}
$$
\end{wrapfigure}
A \emph{monad} on $\mathsf{C}$ is a triple $(T,\mu,\eta)$, where $T:\mathsf{C}\to \mathsf{C}$ is an endofunctor and $\mu:T^2\implies T$, $\eta:\mathcal{I}d\implies T$ are two natural transformations for which the following two diagrams commute:
The transformation $\mu$ is called  \emph{multiplication} and $\eta$ \emph{unit}.
Each monad gives rise to a canonical category - Kleisli category for $T$. If $(T,\mu,\eta)$ is a monad on category $\mathsf{C}$ then \emph{Klesli category} $\mathcal{K}l(T)$ for $T$ has the class of objects equal to the class of objects of $\mathsf{C}$ and for two objects $X,Y$ in $\mathcal{K}l(T)$ we have $Hom_{\mathcal{K}l(T)}(X,Y) = Hom_{\mathsf{C}}(X,TY)$
with the composition $\cdot$ in $\mathcal{K}l(T)$ defined between two morphisms $f:X\to TY$ and $g:Y\to TZ$ by
$g\cdot f := \mu_Z \circ T(g) \circ f$ (here, $\circ$ denotes the composition in $\mathsf{C}$).

\begin{example}
The powerset endofunctor $\mathcal{P}:\mathsf{Set}\to \mathsf{Set}$ is a monad with the multiplication $\mu:\mathcal{P}^2\implies\mathcal{P}$ and the unit $\eta:\mathcal{I}d\implies \mathcal{P}$  given on their $X$-components by 
$\mu_X:\mathcal{P}\mathcal{P}X\to \mathcal{P}X; S \mapsto \bigcup S$  and $\eta_X:X\to \mathcal{P}X; x\mapsto \{x\}$.
For any category $\mathsf{C}$ with binary coproducts and an object $A\in \mathsf{C}$ define $\mathcal{M}_A:\mathsf{C}\to\mathsf{C}$ as $\mathcal{M}_A =\mathcal{I}d +  A$. The functor carries a monadic structure $(\mathcal{M}_A,\mu,\eta)$, where the $X$-components of the multiplication and the unit are the following: $\mu_X:(X+A)+A\to X+A; \mu_X = [id_{X+A},\iota^2]$ and $\eta_X:X\to X+A; \eta_X=\iota^1$. Here, $\iota^1$ and $\iota^2$ denote the coprojections into the first and the second component of $X+A$ respectively.
The monad $\mathcal{M}_A$ is sometimes called \emph{exception monad}.
\end{example}

Since in many cases we will work with two categories at once: $\mathsf{C}$ and $\mathcal{K}l(T)$, morphisms in $\mathsf{C}$ will be denoted using standard arrow $\to$, whereas for morphisms in $\mathcal{K}l(T)$ we will use the symbol $\multimap$. For any object $X$ in $\mathsf{C}$ (or equivalently in $\mathcal{K}l(T)$) the identity map from $X$ to itself in $\mathsf{C}$  will be denoted by $id_X$ and in $\mathcal{K}l(T)$ by $1_X$ or simply $1$ if the domain can be deduced from the context. 

The category $\mathsf{C}$ is a subcategory of $\mathcal{K}l(T)$ where the inclusion functor $^{\sharp}$ sends each object $X\in \mathsf{C}$ to itself and each morphism $f:X\to Y$ in $\mathsf{C}$ to the morphism $f^{\sharp}:X\multimap Y$ given by $
f^{\sharp}:X\to TY; f^{\sharp} = \eta_Y\circ f$.
Each monad $(T,\mu,\eta)$ on a category $\mathsf{C}$ arises as the composition of left and right adjoint:
\begin{wrapfigure}[10]{r}{0.21\textwidth}
\vspace{-40pt}
$$
\xymatrix@-1pc{
\mathsf{C}\ar@/^1.5pc/[r]^{^{\sharp}}\ar@{}[r]|\perp & \mathcal{K}l(T)\ar@/^1.5pc/[l]^{U_T}
\\
\\
\mathcal{K}l(T) \ar[r]^{\overline{F}} & \mathcal{K}l(T)  \\
\mathsf{C}\ar[u]^{\sharp} \ar[r]_F & \mathsf{C}\ar[u]_{\sharp}
}
$$
\end{wrapfigure}
Here, $U_T:\mathcal{K}l(T)\to \mathsf{C}$ is a functor defined as follows. For any object $X\in \mathcal{K}l(T)$ (i.e. $X\in \mathsf{C}$) the object $U_T X$ is given by $U_T X := TX$ and for any morphism $f:X\multimap Y$ in $\mathcal{K}l(T)$ (i.e. $f:X\to TY$ in $\mathsf{C}$) the morphism $U_T f:TX\to TY$ is given by $U_T f = \mu_Y\circ Tf$.

We say that a functor $F:\mathsf{C}\to\mathsf{C}$ \emph{lifts to} an endofunctor $\overline{F}:\mathcal{K}l(T)\to\mathcal{K}l(T)$ provided that the following diagram commutes \cite{HasJacSok,JacSilSok}:

\noindent There is a one-to-one correspondence between liftings $\overline{F}$ and \emph{distributive laws} $\lambda: FT\implies TF$  \cite{JacSilSok,ManMul}.
Given a distributive law $\lambda:FT\implies TF$ a lifting $\overline{F}:\mathcal{K}l(T)\to\mathcal{K}l(T)$ is defined by:
\begin{align*}
& \overline{F}X := FX \text{ for any object } X\in \mathcal{K}l(T),\\
& \overline{F}f:FX\to TFY; \overline{F}f = \lambda_Y \circ Ff \text{ for any morphism } f:X\to TY.
\end{align*}
Conversely, a lifting $\overline{F}:\mathcal{K}l(T)\to \mathcal{K}l(T)$ of $F$ gives rise to a distributive law $\lambda:FT\implies TF$ defined by $\lambda_X:FTX\to TFX; \lambda_X = \overline{F}(id_{TX})$. 
A monad $T$ on a cartesian closed category $\mathsf{C}$ is called \emph{strong} if there is a transformation $\text{st}_{X,Y}:X\times TY\to T(X\times Y)$ called \emph{tensorial strength} satisfying the strength laws listed in e.g. \cite{Kock}. Existence of strength guarantees that for any object $\Sigma$ the functor $\Sigma\times \mathcal{I}d:\mathsf{C}\to \mathsf{C}$ admits a lifting  $\overline{\Sigma}:\mathcal{K}l(T)\to\mathcal{K}l(T)$.
To be more precise we define a functor $\overline{\Sigma}:\mathcal{K}l(T)\to\mathcal{K}l(T)$ as follows. For any object $X\in \mathcal{K}l(T)$ (i.e. $X\in \mathsf{C}$) we put
$
\overline{\Sigma}X := \Sigma\times X,
$
and for any morphism $f:X\multimap Y$ (i.e. $f:X\to TY$ in $\mathsf{C}$) we define $\overline{\Sigma}f:\Sigma\times X\to T(\Sigma\times Y)$ by $\overline{\Sigma}f:= \text{st}_{\Sigma,Y}\circ ( id_{\Sigma}\times f)$.
Existence of the transformation $\text{st}_{X,Y}$ is not a strong assumption. For instance all monads on $\mathsf{Set}$ are strong.

A category is \emph{order enriched} if each hom-set is a poset with order preserved by composition. An endofunctor  on an order enriched category is \emph{locally monotonic} if it preserves order. 
A category $\mathsf{C}$ is \emph{$\textbf{Cppo}$-enriched} if  for any objects $X,Y$:
\begin{itemize}
\item the hom-set $Hom_{\mathsf{C}}(X,Y)$ is a poset with a least element $\perp$, 
\item for any ascending $\omega$-chain $f_0\leqslant f_1\leqslant \ldots $ in  $Hom_{\mathsf{C}}(X,Y)$ the supremum $\bigvee_{i\in \mathbb{N}} f_i$ exists,
\item $g\circ \bigvee_{i\in \mathbb{N}} f_i = \bigvee_{i\in \mathbb{N}} g\circ f_i$ and $(\bigvee_{i\in \mathbb{N}} f_i)\circ h = \bigvee_{i\in \mathbb{N}} f_i\circ h$ for any ascending $\omega$-chain $f_0\leqslant f_1\leqslant \ldots$ and $g,h$ with suitable domain and codomain.
\end{itemize}
Note that it is \emph{not} necessarily the case that $f\circ \perp=\perp$ or $\perp\circ f=\perp$ for any morphism $f$.
 An endofunctor on a $\textbf{Cppo}$-enriched category is called \emph{locally continuous} if it preserves suprema of ascending $\omega$-chains. For more details on $\bf{Cppo}$-enriched categories the reader is referred to e.g. \cite{AbraJung,HasJacSok}.
\begin{example}
The Kleisli category for the powerset monad $\mathcal{P}$ is $\textbf{Cppo}$-enriched \cite{HasJacSok}. The order on the hom-sets is imposed by the natural point-wise order. The strength map for $\mathcal{P}$ is given by
$$\text{st}_{X,Y}:X\times \mathcal{P}Y\to \mathcal{P}(X\times Y); (x,S) \mapsto \{(x,y)\mid y\in S\}.$$ 
The lifting $\overline{\Sigma}:\mathcal{K}l(\mathcal{P})\to \mathcal{K}l(\mathcal{P})$ of $\Sigma\times \mathcal{I}d:\mathsf{Set}\to \mathsf{Set}$ is a locally continuous functor \cite{HasJacSok}.
The Kleisli category for the monad $\mathcal{M}_1$ on $\mathsf{Set}$ is also $\textbf{Cppo}$-enriched \cite{HasJacSok}. Order on hom-sets is imposed by the point-wise order and for any $X$  the set $\mathcal{M}_1 X=X +  1=X +  \{\perp\}$ is a poset whose partial order $\leqslant$ is given by $x\leqslant y$ iff $x=\perp$ or $x=y$. 
\end{example}

\subsubsection{Monads on Kleisli categories} 
In this paper we will often work with monads on Kleisli categories. Here we list basic properties of such monads. Everything presented below with the exception of the last theorem follows easily by classical results in category theory (see e.g. \cite{MacLane}). Assume that $(T,\mu,\eta)$ is a monad on $\mathsf{C}$ and $S:\mathsf{C}\to\mathsf{C}$ is a functor that lifts to $\overline{S}:\mathcal{K}l(T)\to \mathcal{K}l(T)$ with the associated distributive law $\lambda:ST\implies TS$.
Moreover, let $(\overline{S},m,e)$ be a monad on $\mathcal{K}l(T)$. We have the following two adjoint situations whose composition is an adjoint situation \cite{MacLane}.
$$
\xymatrix@-1pc{
\mathsf{C}\ar@/^1pc/[r]^{^{\sharp}}\ar@{}[r]|\perp & \mathcal{K}l(T)\ar@/^1pc/[r]^{^{\sharp}}\ar@/^1pc/[l]^{U_T}\ar@{}[r]|\perp & \mathcal{K}l(\overline{S})\ar@/^1pc/[l]^{U_{\overline{S}}}
}
$$
This yields a monadic structure on the functor $TS:\mathsf{C}\to \mathsf{C}$. The $X$-components of the multiplication $\mathfrak{m}$ and the unit $\mathfrak{e}$ of the monad $TS$ are given by:
$$
\mathfrak{m}_X = \mu_{SX}\circ T\mu_{SX} \circ TT m_X \circ T\lambda_{SX} \quad \text{and}\quad \mathfrak{e}_X = e_X.
$$
The composition $\cdot$ in $\mathcal{K}l(TS) = \mathcal{K}l(\overline{S})$ is given in terms of the composition in $\mathsf{C}$ as follows. For $f:X\to TSY$ and $g:Y\to TSZ$ we have:
$$
\xymatrix{
X\ar@{-->}[d]^{g\cdot f} \ar[r]^f & TSY \ar[r]^{TSg} &
TSTSZ \ar[r]^{T\lambda_{SZ}} & TTSSZ \ar[d]^{TT(m_Z)} \\
 TSZ & &TTSZ\ar[ll]^{\mu_{SZ}} & TTTSZ\ar[l]^{T\mu_{SZ}} 
}
$$

The following result can be proved by straightforward verification.
\begin{theorem}\label{theorem_Cppo_continuous}
Assume that $\mathcal{K}l(T)$ is $\textbf{Cppo}$-enriched and  $\overline{S}$ is locally continuous. Then $\mathcal{K}l(TS)=\mathcal{K}l(\overline{S})$ is $\textbf{Cppo}$-enriched.
\end{theorem}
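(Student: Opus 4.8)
The plan is to exploit the fact that, as a category, $\mathcal{K}l(\overline{S})$ is built directly on top of $\mathcal{K}l(T)$, so that most of the required $\textbf{Cppo}$-structure is inherited and only the behaviour of the new Kleisli composition needs checking. Recall that the objects of $\mathcal{K}l(\overline{S})$ are those of $\mathcal{K}l(T)$ and that a morphism $X\to Y$ in $\mathcal{K}l(\overline{S})$ is by definition a morphism $X\multimap \overline{S}Y$ in $\mathcal{K}l(T)$, i.e. $Hom_{\mathcal{K}l(\overline{S})}(X,Y)=Hom_{\mathcal{K}l(T)}(X,\overline{S}Y)$. Hence each hom-poset of $\mathcal{K}l(\overline{S})$ is literally a hom-poset of $\mathcal{K}l(T)$ and, the latter being $\textbf{Cppo}$-enriched, it carries a least element $\perp$ and admits suprema of all ascending $\omega$-chains. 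This settles the first two clauses in the definition of a $\textbf{Cppo}$-enriched category at once, and moreover shows that a supremum of a chain computed in $\mathcal{K}l(\overline{S})$ coincides with the one computed in $\mathcal{K}l(T)$. The only point requiring work is the third clause: that composition in $\mathcal{K}l(\overline{S})$ is monotone and distributes over such suprema in each variable.

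To this end I would write the Kleisli composition of $f:X\multimap \overline{S}Y$ and $g:Y\multimap \overline{S}Z$ explicitly in terms of the composition $\cdot$ of $\mathcal{K}l(T)$ and the multiplication $m$ of the monad $(\overline{S},m,e)$; denoting the composition of $\mathcal{K}l(\overline{S})$ by $\bullet$, this is $g\bullet f = m_Z\cdot \overline{S}(g)\cdot f$. Monotonicity is then immediate: if $f\leqslant f'$ and $g\leqslant g'$ then $\overline{S}(g)\leqslant \overline{S}(g')$, since a locally continuous functor is in particular locally monotonic, and composing these inequalities by monotonicity of $\cdot$ in $\mathcal{K}l(T)$ gives $g\bullet f\leqslant g'\bullet f'$.

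For continuity I would treat the two arguments separately. Fixing $g$ and taking an ascending chain $f_0\leqslant f_1\leqslant\ldots$, the composite $g\bullet f_i = (m_Z\cdot \overline{S}(g))\cdot f_i$ is just the chain $(f_i)$ with the fixed morphism $m_Z\cdot \overline{S}(g)$ applied on the left, so the left distribution law for $\cdot$ in $\mathcal{K}l(T)$ yields $g\bullet\bigvee_i f_i=\bigvee_i(g\bullet f_i)$; note this half uses nothing about $\overline{S}$. Fixing $f$ and taking an ascending chain $g_0\leqslant g_1\leqslant\ldots$, I would compute $\left(\bigvee_i g_i\right)\bullet f = m_Z\cdot \overline{S}\!\left(\bigvee_i g_i\right)\cdot f$, and here is the only place where the hypotheses genuinely interact: local continuity of $\overline{S}$ replaces $\overline{S}\!\left(\bigvee_i g_i\right)$ by $\bigvee_i \overline{S}(g_i)$, after which the right distribution law (applied with the fixed $f$) and then the left one (applied with the fixed $m_Z$) push the supremum through the composition to give $\bigvee_i(g_i\bullet f)$. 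This second computation is the crux; everything else is bookkeeping inherited from $\mathcal{K}l(T)$. The only (mild) obstacle is ensuring that the supremum formed in $\mathcal{K}l(\overline{S})$ really is the one formed in $\mathcal{K}l(T)$, so that local continuity of $\overline{S}$ applies verbatim — which is exactly what the identification of hom-posets in the first step guarantees.
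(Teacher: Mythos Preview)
Your proof is correct and is precisely the ``straightforward verification'' the paper alludes to without spelling out: identify the hom-posets of $\mathcal{K}l(\overline{S})$ with those of $\mathcal{K}l(T)$, write the new Kleisli composition as $m_Z\cdot \overline{S}(g)\cdot f$, and then use local continuity of $\overline{S}$ together with the distribution laws of $\mathcal{K}l(T)$ to push suprema through each argument. There is nothing to add.
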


\section{Hiding internal moves inside a monadic structure}
\label{subsection:monads}

Throughout this paper we assume that $(T,\mu,\eta)$ is a monad on a category $\mathsf{C}$ with binary coproducts. Let $ + $ denote the binary coproduct operator in $\mathsf{C}$. Assume that $F:\mathsf{C}\to\mathsf{C}$ is a functor and let $F_\tau=F +  \mathcal{I}d$. In this paper we deal with functors of the form $TF_\tau = T(F +  \mathcal{I}d)$. 
Labelled transition system and $\varepsilon$-NA functor are of this form since 
\begin{align*}
&\mathcal{P}(\Sigma_\tau \times \mathcal{I}d)\cong \mathcal{P}(\Sigma\times \mathcal{I}d +  \mathcal{I}d)=\mathcal{P}(F +  \mathcal{I}d) \text{ for } F=\Sigma\times \mathcal{I}d \text{ and }\\
& \mathcal{P}(\Sigma_\varepsilon \times \mathcal{I}d +  1)\cong \mathcal{P}(\Sigma\times \mathcal{I}d +  1  +  \mathcal{I}d) = \mathcal{P}(F +  \mathcal{I}d) \text{ for }F=\Sigma\times \mathcal{I}d +  1.
\end{align*}
The functor $F$ represents the visible part of the structure, whereas the functor $\mathcal{I}d$ represents silent moves. Functors of this type were used to consider $\varepsilon$-elimination from coalgebraic perspective in \cite{HasJacSok_jap,SilWester}.
In \cite{Brengos13} we noticed that given some mild assumptions on the monad $T$, the functor $TF_\tau$ can itself be turned into a monad or embedded into one. The aim of this section is to recall these results here. Before we do it, we will list basic definitions and properties concerning  categories and monads used in the construction.    

\subsubsection{Basic definitions and properties}

  For a family of objects $\{X_k\}_{k\in I}$ if the coproduct $\coprod_i X_i$ exists then by $\iota^k:X_k \to \coprod_k X_k$ we denote the coprojection into $k$-th component of $\coprod_k X_k$.

We say that a category is \emph{a category with zero morphisms} if for any two objects $X,Y$ there is a morphism $0_{X,Y}$ which is an annihilator w.r.t. composition. To be more precise $f\circ 0 = 0=0\circ g$ for any morphisms $f,g$ with suitable domain and codomain. 
\begin{example}
For the monad $T\in \{\mathcal{P},\mathcal{M}_1\}$ on $\mathsf{Set}$ the category $\mathcal{K}l(T)$ is a category with zero morphisms given by $\perp:X\to \mathcal{P}Y; x\mapsto \varnothing$ for $\mathcal{P}$ and $\perp:X\to \mathcal{M}_1 Y; x\mapsto \perp$ for the monad $\mathcal{M}_1$.
\end{example}  
Given two monads $(S,\mu^S,\eta^S)$ and $(S',\mu^{S'},\eta^{S'})$ a \emph{monad morphism} $h$ is a natural transformation $h:S\implies S'$ which preserves unit and multiplication of monads, i.e. $h\circ \eta^{S}  =\eta^{S'}$ and $h\circ \mu^{S} = \mu^{S'}\circ hh$. A \emph{free monad} over a functor $F:\mathsf{C}\to \mathsf{C}$ \cite{Barr,Manes} is a monad $(F^{*},m,e)$ together with a natural transformation $\nu:F\implies F^{*}$ such that for any monad $(S,m^S,e^S)$ on $\mathsf{C}$ and a natural transformation $s:F\implies S$ 	there is a unique monad morphism $h:(F^{*},m,e)\to (S,m^S,e^S)$ such that the following diagram commutes:
$$
\xymatrix@-0.5pc{
F\ar@{=>}[r]^{\nu}\ar@{=>}[dr]_{s} & F^{*} \ar@{=>}[d]^{h} \\
& S
}
$$ 

\begin{theorem}\cite{Barr}\label{theorem:free_functor}
Assume that for an endofunctor $F:\mathsf{C}\to \mathsf{C}$ and any object $X$ the free $F$-algebra over $X$ (=initial $F(-) +  X$-algebra) $i_X$ exists in $\mathsf{C}^F$. For an object $X$ and a morphism $f:X\to Y$ in $\mathsf{C}$ let $F^{*}X$ denote the carrier of $i_X$ and $F^{*}f:F^{*}X\to F^{*}Y$ denote the unique morphism for which the following diagram commutes:
$$
\xymatrix@-0.5pc{
FF^{*}X+X\ar[rr]^{i_X}\ar@{-->}[d]_{F(F^{*}f)+id_X} & & F^{*}X\ar@{-->}[d]^{F^{*}f} \\
FF^{*}Y+X\ar[r]_{id+f}& 
FF^{*}Y+Y\ar[r]_{i_Y} & F^{*}Y
}
$$  
The assignment $F^{*}$ is functorial and 
can be naturally equipped with a monadic structure $(F^{*},m,e)$ which is a consequence of the universal properties of $i_X$. Moreover, this monad is the free monad over $F$.
\end{theorem}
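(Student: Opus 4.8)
The plan is to reduce everything to a single workhorse, the universal property of the free $F$-algebra, and then to read off functoriality, the monad structure, and freeness from uniqueness of extensions. Write $i_X = [\psi_X, e_X] : FF^{*}X + X \to F^{*}X$, so that $\psi_X = i_X\circ \iota^1 : FF^{*}X \to F^{*}X$ is an $F$-algebra structure on $F^{*}X$ and $e_X = i_X\circ \iota^2 : X \to F^{*}X$ is the insertion of generators. Initiality of the $F(-)+X$-algebra $i_X$ is equivalent to the statement I will use throughout: for every $F$-algebra $a : FA \to A$ and every morphism $g : X \to A$ in $\mathsf{C}$ there is a unique $F$-algebra homomorphism $g^{\dagger} : (F^{*}X,\psi_X) \to (A,a)$ with $g^{\dagger}\circ e_X = g$. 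I would also set $\nu_X := \psi_X\circ F(e_X) : FX \to F^{*}X$.

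First I would handle functoriality and the unit. For $f : X \to Y$ define $F^{*}f := (e_Y\circ f)^{\dagger}$; this is exactly the morphism appearing in the commuting diagram of the statement, since that diagram says precisely that $F^{*}f$ is the $F$-algebra homomorphism extending $e_Y\circ f$ along the generators. Then $F^{*}(id) = id$ and $F^{*}(g\circ f) = F^{*}g\circ F^{*}f$ follow because in each case both sides are $F$-algebra homomorphisms out of $F^{*}X$ restricting to the same morphism on $e_X$, so uniqueness forces equality; naturality of $e$, i.e. $F^{*}f\circ e_X = e_Y\circ f$, holds by construction. The multiplication is defined by $m_X := (id_{F^{*}X})^{\dagger} : F^{*}F^{*}X \to F^{*}X$, the unique $F$-algebra homomorphism out of the free algebra over $F^{*}X$ extending the identity along $e_{F^{*}X}$. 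Equivalently, and more cleanly, the hypothesis says that the forgetful functor $\mathsf{C}^{F}\to\mathsf{C}$ has a left adjoint; the induced monad on $\mathsf{C}$ has underlying functor $F^{*}$, unit $e$ and multiplication $m$, so functoriality and the monad laws come for free from adjunction theory, and one only checks that the explicit $F^{*}f$ and $m_X$ coincide with the adjunction data.

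Next I would verify the monad laws $m_X\circ e_{F^{*}X} = id$, $m_X\circ F^{*}e_X = id$ and $m_X\circ m_{F^{*}X} = m_X\circ F^{*}m_X$, together with naturality of $m$. Each is an equality between two $F$-algebra homomorphisms out of a free algebra; in every case both sides restrict to the same morphism on the generators, so uniqueness of free extensions closes the argument (the left unit law is immediate from the definition of $m_X$). These steps are routine but diagram-heavy, and I would organise them so that the only inputs are the defining equation of $(-)^{\dagger}$ and the naturality of $e$ and $\psi$.

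Finally, for freeness, given a monad $(S,m^{S},e^{S})$ and a natural transformation $s : F \implies S$, I would equip each $SX$ with the $F$-algebra structure $a_X := m^{S}_X\circ s_{SX} : FSX \to SSX \to SX$ and define $h_X := (e^{S}_X)^{\dagger} : F^{*}X \to SX$. By construction $h\circ e = e^{S}$ and $h\circ \nu = s$; naturality of $h$, preservation of multiplication $h\circ m = m^{S}\circ hh$, and uniqueness of the monad morphism satisfying $h\circ\nu = s$ all follow by comparing $F$-algebra homomorphisms that agree on generators. I expect the main obstacle to be exactly this last paragraph: one must check that $a_X$ is genuinely an $F$-algebra structure and, above all, that $h$ preserves multiplication, and this is the one verification that really uses the monad laws of $S$ and the naturality of $s$ rather than mere bookkeeping with uniqueness. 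The cleanest way to discharge it is to exhibit both $h\circ m$ and $m^{S}\circ hh$ as $F$-algebra homomorphisms $F^{*}F^{*}X \to SX$ extending $h_X$ along $e_{F^{*}X}$, so that uniqueness again applies.
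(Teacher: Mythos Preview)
The paper does not provide its own proof of this theorem: it is stated with a citation to Barr and used as background, so there is no in-paper argument to compare against. Your proposal is correct and is essentially the standard proof one finds in the cited literature: reformulate initiality of $i_X$ as the free--forgetful universal property, obtain $F^{*}$, $e$ and $m$ as adjunction data, and derive freeness by equipping $SX$ with the $F$-algebra $m^{S}_X\circ s_{SX}$ and invoking uniqueness of extensions. One cosmetic remark: your comment that ``one must check that $a_X$ is genuinely an $F$-algebra structure'' is vacuous, since any morphism $FA\to A$ is an $F$-algebra; the only substantive verification in that paragraph is the one you correctly single out, namely that $m^{S}_X$ is an $F$-algebra homomorphism $(SSX,a_{SX})\to(SX,a_X)$, which is where associativity of $m^{S}$ and naturality of $s$ enter.
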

In the sequel we assume the following:

\begin{itemize}
\item The functor $F:\mathsf{C}\to \mathsf{C}$ lifts to $\overline{F}:\mathcal{K}l(T)\to\mathcal{K}l(T)$. As a direct consequence we get that $F_\tau = F +  \mathcal{I}d$ lifts to a functor $\overline{F_\tau}=\overline{F} +  \mathcal{I}d$ on $\mathcal{K}l(T)$. This follows by the fact that coproducts in $\mathcal{K}l(T)$ come from coproducts in the base category (see also e.g. \cite{HasJacSok} for a discussion on liftings of coproducts of functors).
\item The functor $F$ admits the free $F$-algebra $i_X$ in $\mathsf{C}^F$ for any object $X$. By theorem above this yields the free monad $(F^{*},m,e)$ over $F$ in $\mathsf{C}$.
\end{itemize}

\subsubsection{Monadic structure on $TF_\tau$}
The aim of this subsection is to present the first strategy towards handling the invisible part of computation by a monadic structure. 
  Note that in the following result all morphisms, in particular all coprojections and mediating morphisms, live in $\mathcal{K}l(T)$.
\begin{theorem}\cite{Brengos13}\label{theorem:monad_on_TF} If $\mathcal{K}l(T)$ is a category with zero morphisms then
the triple $(\overline{F_\tau},m',e')$, where $e':\mathcal{I}d\multimap \overline{F} +  \mathcal{I}d; e_X' = \iota^2$
and 
\begin{align*}
&m':\overline{F}(\overline{F} +  \mathcal{I}d) +  (\overline{F} +  \mathcal{I}d)\stackrel{\overline{F}([0,id]) +  id}{\multimap} \overline{F}  +   (\overline{F} +  \mathcal{I}d) \stackrel{[\iota^1,id]}{\multimap} \overline{F} +  \mathcal{I}d
\end{align*}
is a monad on $\mathcal{K}l(T)$. Two adjoint situations $\mathsf{C}\rightleftarrows \mathcal{K}l(T) \rightleftarrows \mathcal{K}l(\overline{F_\tau})$ yield a monadic structure on $TF_\tau$.
\end{theorem}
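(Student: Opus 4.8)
The statement splits into two parts: first, that the triple $(\overline{F_\tau},m',e')$ satisfies the monad axioms on $\mathcal{K}l(T)$; and second, that composing the two adjunctions $\mathsf{C}\rightleftarrows\mathcal{K}l(T)\rightleftarrows\mathcal{K}l(\overline{F_\tau})$ produces the monad $TF_\tau$ on $\mathsf{C}$. The plan is to concentrate on the first part, since once it is established the second follows immediately from the general discussion of monads on Kleisli categories given above: there it was recorded that a functor $S$ which lifts to a monad $\overline{S}$ on $\mathcal{K}l(T)$ yields a composite adjoint situation whose associated monad is $TS$, with the explicitly computed multiplication $\mathfrak{m}$ and unit $\mathfrak{e}$. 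Applying this with $S=F_\tau$ and $\overline{S}=\overline{F_\tau}$ — recalling from our standing assumptions that $F_\tau=F+\mathcal{I}d$ indeed lifts to $\overline{F_\tau}=\overline{F}+\mathcal{I}d$ — gives the desired monadic structure on $TF_\tau$.

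So the real work is verifying that $(\overline{F_\tau},m',e')$ is a monad in $\mathcal{K}l(T)$. First I would check naturality. The unit $e'=\iota^2$ is the coprojection of $\mathcal{I}d$ into the second summand of $\overline{F}+\mathcal{I}d$, which is natural; and $m'=[\iota^1,id]\cdot(\overline{F}([0,id])+id)$ is a composite of natural transformations, since copairings and coproducts of natural families are natural, $\overline{F}$ preserves naturality, and — crucially — the family of zero morphisms is itself natural precisely because of the annihilator laws $f\circ 0=0=0\circ g$. I would then establish the two unit laws by precomposing with coprojections. For the left unit, $e'_{\overline{F_\tau}}$ injects into the second summand of $\overline{F_\tau}^2$, on which both factors of $m'$ act as the identity, so $m'\cdot e'_{\overline{F_\tau}}=1$. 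For the right unit, $\overline{F_\tau}(e')=\overline{F}(\iota^2)+\iota^2$ splits over the two summands of $\overline{F_\tau}$: on the $\overline{F}$-summand one uses $[0,id]\cdot\iota^2=id$ together with functoriality of $\overline{F}$ to collapse $\overline{F}([0,id])\cdot\overline{F}(\iota^2)$ to the identity followed by $\iota^1$, while on the $\mathcal{I}d$-summand everything is already an identity, and the copairing reassembles these to $1_{\overline{F_\tau}}$.

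The main obstacle is associativity, $m'\cdot\overline{F_\tau}(m')=m'\cdot m'_{\overline{F_\tau}}$ as morphisms $\overline{F_\tau}^3\multimap\overline{F_\tau}$. My plan is to expand $\overline{F_\tau}^3X=\overline{F}(\overline{F}(\overline{F}X+X)+(\overline{F}X+X))+\overline{F}(\overline{F}X+X)+\overline{F}X+X$ into its coproduct summands and to verify the equation after precomposing with each coprojection, since a morphism out of a coproduct is determined by its restrictions to the summands. Each branch should reduce to a common value by repeated use of the two defining reductions $[0,id]\cdot\iota^1=0$ and $[0,id]\cdot\iota^2=id$, functoriality of $\overline{F}$, and — for the branches encoding two successive visible $\overline{F}$-steps — the annihilator laws, which guarantee that both composites send such summands to the zero morphism. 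I expect the combinatorial bookkeeping of the coproduct injections, in particular the interaction between $\overline{F}$ applied to a copairing and the outer copairings $[\iota^1,id]$, to be where the computation is most error-prone; a disciplined summand-by-summand chase, rather than any conceptual difficulty, is what carries the proof through.
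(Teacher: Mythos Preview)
The paper does not give its own proof of this theorem: it is quoted from \cite{Brengos13} and simply used. Your proposal is correct and is exactly the kind of direct verification one expects: check the unit and associativity laws for $(\overline{F_\tau},m',e')$ componentwise on coproduct summands, exploiting the annihilator property of zero morphisms, and then invoke the general ``monads on Kleisli categories'' discussion to transport the result along the composite adjunction to a monad structure on $TF_\tau$.

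One small caveat on your associativity argument. When you write $\overline{F_\tau}^3X$ as a four-fold coproduct and propose to precompose with each coprojection, remember that $\overline{F}$ need not preserve coproducts, so on the summand $\overline{F}(\overline{F_\tau}^2X)$ you cannot split further by coprojections directly. The clean way is: on this summand both composites land in the first summand of $\overline{F_\tau}X$ via $\iota^1\cdot\overline{F}(-)$, so by functoriality of $\overline{F}$ the associativity equation there reduces to
\[
[0,id]\cdot m'_X \;=\; [0,id]\cdot[0,id]
\]
as morphisms $\overline{F_\tau}^2X\multimap X$, and \emph{this} equation you can check on the two summands of $\overline{F_\tau}^2X$. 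On the first summand both sides are $0$ by the annihilator laws; on the second both are $[0,id]$. The remaining three summands of $\overline{F_\tau}^3X$ together form the second copy of $\overline{F_\tau}^2X$, where both associativity composites are easily seen to equal $m'_X$. With this minor adjustment your outline goes through without difficulty.
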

The composition $\cdot$ in $\mathcal{K}l(TF_\tau)=\mathcal{K}l(\overline{F_\tau})$ is given as follows. Let $\lambda:F_\tau T\implies TF_\tau$ denote the distributive law associated with the lifting $\overline{F_\tau}$ of $F_\tau$. For any  $f:X\to TF_\tau Y$, $g:Y\to TF_\tau Z$ we have:
$$
g\cdot f = \mu_{F_\tau Z}\circ T\mu_{F_\tau Z} \circ TTm'_Z \circ T \lambda_{F_\tau Z} \circ TF_\tau g \circ f. 
$$

We illustrate the above construction in the following example, where $T=\mathcal{P}$ and $F_\tau = \Sigma_\tau \times \mathcal{I}d$.
\begin{example} \label{example:LTS_monad} As mentioned before, the monad $\mathcal{P}$ (as any other monad on $\mathsf{Set}$) comes with strength $\text{st}$ which lifts the functor $\Sigma_\tau \times \mathcal{I}d:\mathsf{Set}\to\mathsf{Set}$ to the functor $\overline{\Sigma_\tau}:\mathcal{K}l(\mathcal{P})\to \mathcal{K}l(\mathcal{P})$.
For the functor $\overline{\Sigma_\tau}\cong \overline{\Sigma}+\mathcal{I}d$ we define the multiplication $m'$ and the unit $e'$ as in Theorem \ref{theorem:monad_on_TF}. For any set $X\in \mathcal{K}l(\mathcal{P})$ we put $m_X':\overline{\Sigma_\tau}\overline{\Sigma_\tau} X\multimap \overline{\Sigma_\tau} X$  and $e_X':X\multimap \overline{\Sigma_\tau}X$ to be:
$$
m_X'(\sigma_1,\sigma_2,x)  = \left\{ \begin{array}{cc} \{(\sigma_1,x)\} & \text{if } \sigma_2 = \tau, \\
\{(\sigma_2,x)\} & \text{if } \sigma_1 = \tau, \\
\varnothing & \text{otherwise}\end{array}\right.
\qquad e_X'(x) = \{(\tau,x)\}.
$$
By Theorem \ref{theorem:monad_on_TF} the triple $(\overline{\Sigma_\tau},m',e')$ is a monad on $\mathcal{K}l(\mathcal{P})$. By composing the two adjoint situations we get a monadic structure on the LTS functor.
The composition in $\mathcal{K}l(\mathcal{P}(\Sigma_\tau \times~\mathcal{I}d))$ is given as follows. For $f:X\to \mathcal{P}(\Sigma_\tau \times Y)$ and $g:Y\to \mathcal{P}(\Sigma_\tau \times Z)$ we have $g\cdot f:X\to \mathcal{P}(\Sigma_\tau \times Z)$:
$$
g\cdot f (x) = \{(\sigma,z)\mid x\stackrel{\sigma}{\to}_f y \stackrel{\tau}{\to}_g z \text{ or }x\stackrel{\tau}{\to}_f y \stackrel{\sigma}{\to}_g z \text{ for some }y\in Y\}.
$$
\end{example}

The construction provided by Theorem \ref{theorem:monad_on_TF} can be applied 
only when $\mathcal{K}l(T)$ is a category with zero morphisms. Some monads fail to have this property. For example, if instead of considering the monad $\mathcal{P}$ we consider the non-empty powerset monad $\mathcal{P}_{\neq \varnothing}$. In what follows we focus on the second strategy for handling internal transitions by a monadic structure on the functor which does not require from $\mathcal{K}l(T)$ to be a category with zero morphisms. 

\subsubsection{Monadic structure on $TF^{*}$} Here, we present an approach towards dealing with silent moves which uses free monads. At the beginning of this section we stated clearly that the coalgebras we are dealing with are of the type $TF_\tau$. Any $TF_\tau$-coalgebra $\alpha:X\to TF_\tau X$ can be turned into a $TF^{*}$-coalgebra $\underline{\alpha}:X\to TF^{*}X$ by putting
$$\underline{\alpha} = T([\nu_X,e_X])\circ \alpha,$$
where the mono-transformation $[\nu,e]:F_\tau\implies F^{*}$ comes from the definition of a free monad.

\begin{example}
Consider the LTS functor $\mathcal{P}(\Sigma_\tau \times \mathcal{I}d) \cong \mathcal{P}(\Sigma\times \mathcal{I}d  +  \mathcal{I}d)$ and let $F=\Sigma\times \mathcal{I}d$. The free monad over $F$ in $\mathsf{Set}$ is given by $(\Sigma^{*}\times \mathcal{I}d,m,e)$, where $\Sigma^{*}$ is the set of finite words over $\Sigma$ together with the empty string $\varepsilon\in \Sigma^{*}$ and $m$ and $e$ are given for any set $X$ as follows:
\begin{align*}
& m_X:\Sigma^{*}\times \Sigma^{*}\times X\to \Sigma^{*}\times X; (s,s',x) \mapsto (ss',x) \text{ and }\\
& e_X: X\to \Sigma^{*}\times X; x\mapsto (\varepsilon,x).
\end{align*}
For any $\alpha:X\to \mathcal{P}(\Sigma_\tau\times X)$ we define $\underline{\alpha}:X\to \mathcal{P}(\Sigma^{*}\times X)$ by $$\underline{\alpha}(x) = \{(a,y) \mid (a,y)\in \alpha(x) \text{ and } a\in \Sigma\}\cup \{(\varepsilon,y) \mid (\tau,y)\in \alpha(x)\}.$$ 
\end{example}

\begin{example}\label{example_NA_translate}
The $\varepsilon$-NA's  are coalgebras of the type $TF_\tau$ for the monad $T=\mathcal{P}$ and $F=\Sigma\times \mathcal{I}d +  1$. The functor $F=\Sigma\times\mathcal{I}d +  1$ lifts to $\mathcal{K}l(\mathcal{P})$ \cite{HasJacSok} and admits all free $F$-algebras. Let $F^{*}$ denote the free monad over $F$. The functor $F^{*}:\mathsf{Set}\to\mathsf{Set}$ is defined on objects and morphisms by
\begin{align*}
&F^{*}X= \Sigma^{*}\times X +  \Sigma^{*},\\
&F^{*}f: \Sigma^{*}\times X +  \Sigma^{*}\to \Sigma^{*}\times Y +  \Sigma^{*}; F^{*}f = (id_{\Sigma^{*}} \times f) +  id_{\Sigma^{*}} \text{ for }f:X\to Y.
\end{align*}
The monadic structure $(F^{*},m,e)$  is given by:
\begin{align*}
&m_X:\Sigma^{*}\times (\Sigma^{*}\times X + \Sigma^{*})+ \Sigma^{*}\to \Sigma^{*}\times X+ \Sigma^{*}; \\
&m_X(s_1,s_2,x)=(s_1s_2,x) \quad m_X(s_1,s_2) = s_1s_2 \quad m_X(s_1)=s_1,\\
&e_X:X\to \Sigma^{*}\times X+ \Sigma^{*}; x\mapsto (\varepsilon,x).
\end{align*}
For any $\varepsilon$-NA coalgebra $\alpha:X\to \mathcal{P}(\Sigma_\varepsilon \times X+1)$ we define $$\underline{\alpha}:X\to \mathcal{P}(\Sigma^{*} \times X+\Sigma^{*}); x\mapsto \{(a,y)\in \Sigma^{*}\times X\mid (a,y)\in \alpha(x)\}\cup A_x, $$
where $A_x= \textbf{if } \checked \in \alpha(x) \textbf{ then }\{\varepsilon\} \textbf{ else }\varnothing$.
\end{example}

In order to proceed with the construction we need one additional lemma.

\begin{lemma}\cite{Brengos13}
The algebra $i_X^\sharp=
\eta_{F^{*}X}\circ i_X:FF^{*}X+X\to TF^{*}X$ is the free $\overline{F}$-algebra over $X$ in $\mathcal{K}l(T)^{\overline{F}}$.
\end{lemma}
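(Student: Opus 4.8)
The plan is to reduce the universal property in $\mathcal{K}l(T)$ entirely to the universal property of the free $F$-algebra $i_X$ in $\mathsf{C}$, which holds by the standing assumption. Recall that an $\overline{F}$-algebra in $\mathcal{K}l(T)$ is a Kleisli morphism $a:\overline{F}A\multimap A$, i.e. a $\mathsf{C}$-morphism $a:FA\to TA$, and that since coproducts and their coprojections in $\mathcal{K}l(T)$ are inherited from $\mathsf{C}$ via $^{\sharp}$, the structure map $i_X^{\sharp}$ splits along $\overline{F}F^{*}X+X$ into its $\overline{F}$-algebra part $\beta:=\eta_{F^{*}X}\circ i_X\circ\iota^1$ and the insertion of generators $e_X^{\sharp}=\eta_{F^{*}X}\circ e_X:X\multimap F^{*}X$ (where $e_X=i_X\circ\iota^2$ is the unit of the free monad). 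So I must show that for every $\overline{F}$-algebra $(A,a)$ and every Kleisli morphism $f:X\multimap A$ there is a unique $\overline{F}$-algebra homomorphism $h:(F^{*}X,\beta)\multimap(A,a)$ with $h\cdot e_X^{\sharp}=f$.

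The key step is a dictionary between $\overline{F}$-algebras in $\mathcal{K}l(T)$ and ordinary $F$-algebras in $\mathsf{C}$. To an $\overline{F}$-algebra $a:FA\to TA$ I associate the $F$-algebra $\overline{a}:=\mu_A\circ Ta\circ\lambda_A:FTA\to TA$ on the carrier $TA$, where $\lambda:FT\implies TF$ is the distributive law of the lifting $\overline{F}$. The heart of the proof is the pair of identities, valid for any $\mathsf{C}$-morphism $h:F^{*}X\to TA$ read as a Kleisli morphism $h:F^{*}X\multimap A$,
$$h\cdot\beta = h\circ i_X\circ\iota^1 \quad\text{and}\quad a\cdot\overline{F}h = \overline{a}\circ Fh,$$
both regarded as $\mathsf{C}$-morphisms $FF^{*}X\to TA$. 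The first unfolds the Kleisli composite $\mu_A\circ T(h)\circ\eta_{F^{*}X}\circ i_X\circ\iota^1$ via naturality of $\eta$ and the unit law $\mu\circ\eta_T=id$; the second unfolds $\mu_A\circ T(a)\circ\lambda_A\circ Fh$ using $\overline{F}h=\lambda_A\circ Fh$. Consequently $h$ is an $\overline{F}$-algebra homomorphism $(F^{*}X,\beta)\multimap(A,a)$ in $\mathcal{K}l(T)$ if and only if it is an $F$-algebra homomorphism $(F^{*}X,i_X\circ\iota^1)\to(TA,\overline{a})$ in $\mathsf{C}$. An analogous but shorter computation shows that the generator condition $h\cdot e_X^{\sharp}=f$ in $\mathcal{K}l(T)$ is equivalent to $h\circ e_X=f$ in $\mathsf{C}$.

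With this dictionary the conclusion is immediate. Given $(A,a)$ and $f:X\to TA$, apply the universal property of the free $F$-algebra $i_X$ to the $F$-algebra $(TA,\overline{a})$ and the morphism $f$: there is a unique $F$-algebra homomorphism $h:F^{*}X\to TA$ with $h\circ e_X=f$. By the dictionary this same $h$ is the unique $\overline{F}$-algebra homomorphism $(F^{*}X,\beta)\multimap(A,a)$ satisfying $h\cdot e_X^{\sharp}=f$, both existence and uniqueness transferring across the equivalences of the two homomorphism conditions. Hence $i_X^{\sharp}$ carries the free $\overline{F}$-algebra over $X$ in $\mathcal{K}l(T)^{\overline{F}}$.

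I expect the only real obstacle to be bookkeeping: keeping the two readings of each arrow consistent (a $\mathsf{C}$-arrow into $TA$ versus a Kleisli arrow into $A$), correctly decomposing $i_X^{\sharp}$ along the coproduct $\overline{F}F^{*}X+X$, and discharging the two Kleisli-composition identities above purely from naturality of $\eta$, the definition $\overline{F}h=\lambda_A\circ Fh$, and the monad laws. No genuinely new idea beyond the translation $a\mapsto\overline{a}$ induced by the distributive law should be required.
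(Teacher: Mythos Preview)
Your proposal is correct. The paper itself does not supply a proof of this lemma (it is cited from \cite{Brengos13}), so there is nothing to compare against directly; your argument---translating an $\overline{F}$-algebra $(A,a)$ in $\mathcal{K}l(T)$ to the $F$-algebra $(TA,\mu_A\circ Ta\circ\lambda_A)$ in $\mathsf{C}$ and showing that Kleisli $\overline{F}$-homomorphisms out of $(F^{*}X,\beta)$ correspond bijectively to $F$-homomorphisms out of $(F^{*}X,i_X\circ\iota^1)$---is exactly the standard reduction one would expect, and the two displayed identities you isolate are precisely what is needed to make the dictionary work.
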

 Let $\overline{F}^{*}:\mathcal{K}l(T)\to\mathcal{K}l(T)$ be the functor obtained by following the guidelines of Theorem \ref{theorem:free_functor} using the family $\{i^\sharp_X\}_{X\in \mathcal{K}l(T)}$ of free algebras in $\mathcal{K}l(T)^{\overline{F}}$. 
\begin{theorem}\cite{Brengos13}\label{theorem:monad_free}
We have the following:
\begin{enumerate}
\item $F^{*}:\mathsf{C}\to\mathsf{C}$ lifts to $\overline{F}^{*}:\mathcal{K}l(T)\to \mathcal{K}l(T)$,
\item  $(\overline{F}^{*},m^\sharp,e^\sharp)$ is the free monad over $\overline{F}$ in $\mathcal{K}l(T)$.
\end{enumerate}
Two adjoint situations $\mathsf{C}\rightleftarrows \mathcal{K}l(T)\rightleftarrows \mathcal{K}l(\overline{F}^{*})$ yield a monadic structure on $TF^{*}$.
\end{theorem}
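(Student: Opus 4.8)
The plan is to obtain both claims by transporting Theorem~\ref{theorem:free_functor} from the base category $\mathsf{C}$ to the Kleisli category $\mathcal{K}l(T)$, and then to read off the monad on $TF^{*}$ from the composite of the two adjunctions. I would prove part (2) first. Since coproducts in $\mathcal{K}l(T)$ are inherited from $\mathsf{C}$, the pair $(\mathcal{K}l(T),\overline{F})$ satisfies the hypotheses of Theorem~\ref{theorem:free_functor}, the only nontrivial point being the existence of free $\overline{F}$-algebras. This is precisely the content of the preceding lemma, which identifies $i_X^{\sharp}=\eta_{F^{*}X}\circ i_X$ as the free $\overline{F}$-algebra over $X$ in $\mathcal{K}l(T)^{\overline{F}}$, with carrier $F^{*}X$. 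Applying Theorem~\ref{theorem:free_functor} verbatim inside $\mathcal{K}l(T)$ to the family $\{i_X^{\sharp}\}$ therefore makes $\overline{F}^{*}$ a functor and equips it with a monadic structure $(\overline{F}^{*},m^{\sharp},e^{\sharp})$ that is, by the same theorem, the free monad over $\overline{F}$. This gives part (2), and as a by-product the functoriality of $\overline{F}^{*}$ needed for part (1).

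Next I would verify the lifting in part (1). On objects it is immediate, since both free algebras share the carrier $F^{*}X$, so $\overline{F}^{*}X=F^{*}X$; it remains to check $\overline{F}^{*}(f^{\sharp})=(F^{*}f)^{\sharp}$ for every morphism $f\colon X\to Y$ in $\mathsf{C}$. The idea is to apply the embedding $^{\sharp}$ to the commuting square from Theorem~\ref{theorem:free_functor} that uniquely characterises $F^{*}f$. Because $^{\sharp}$ is a left adjoint and hence preserves coproducts, because $\overline{F}(g^{\sharp})=(Fg)^{\sharp}$ (as $\overline{F}$ lifts $F$), and because $i_X^{\sharp}=(i_X)^{\sharp}$ by definition, this square is carried to exactly the diagram in $\mathcal{K}l(T)$ that uniquely characterises $\overline{F}^{*}(f^{\sharp})$ through the universal property of $i_X^{\sharp}$. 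Uniqueness of the mediating morphism then forces $(F^{*}f)^{\sharp}=\overline{F}^{*}(f^{\sharp})$, which is the lifting condition. With part (1) in hand, the correspondence between liftings and distributive laws yields the associated law $\lambda^{\ast}\colon F^{*}T\implies TF^{*}$.

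Finally, the monadic structure on $TF^{*}$ is obtained from the composite of the two adjoint situations $\mathsf{C}\rightleftarrows\mathcal{K}l(T)\rightleftarrows\mathcal{K}l(\overline{F}^{*})$. This is exactly the setting analysed above, with $S=F^{*}$, $\overline{S}=\overline{F}^{*}$ and distributive law $\lambda^{\ast}$: the composite of two adjunctions is an adjunction, whose induced monad on $\mathsf{C}$ has underlying functor $X\mapsto T\overline{F}^{*}X=TF^{*}X$, with unit $\mathfrak{e}_X=e^{\sharp}_X$ and multiplication $\mathfrak{m}_X=\mu_{F^{*}X}\circ T\mu_{F^{*}X}\circ TTm^{\sharp}_X\circ T\lambda^{\ast}_{F^{*}X}$ given by the formulas recalled earlier. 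I expect the only genuinely delicate step to be the uniqueness argument in part (1): one must line up the two instantiations of Theorem~\ref{theorem:free_functor}, one in $\mathsf{C}$ and one in $\mathcal{K}l(T)$, and confirm that $^{\sharp}$ sends the defining square of $F^{*}f$ on the nose to the defining square of $\overline{F}^{*}(f^{\sharp})$. Everything else is a direct invocation of the preceding lemma, of Theorem~\ref{theorem:free_functor}, and of the general Kleisli-composition construction already in place.
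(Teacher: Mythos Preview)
Your proposal is correct and follows exactly the route the paper sets up: the paper does not reprove this theorem (it is cited from \cite{Brengos13}), but immediately before the statement it supplies the lemma identifying $i_X^{\sharp}$ as the free $\overline{F}$-algebra over $X$ and explicitly defines $\overline{F}^{*}$ by ``following the guidelines of Theorem~\ref{theorem:free_functor} using the family $\{i^\sharp_X\}$''---precisely your strategy of instantiating Barr's theorem inside $\mathcal{K}l(T)$ and then reading off the $TF^{*}$-monad from the composite adjunction.

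One small point worth tightening: you carefully argue $\overline{F}^{*}(f^{\sharp})=(F^{*}f)^{\sharp}$ for part~(1), but the statement also asserts that the multiplication and unit of the free monad in $\mathcal{K}l(T)$ are literally $m^{\sharp}$ and $e^{\sharp}$, i.e.\ the $^{\sharp}$-images of the $\mathsf{C}$-monad data. This needs the same kind of uniqueness check you sketch for the functor action (apply $^{\sharp}$ to the universal diagrams in $\mathsf{C}$ that determine $m_X$ and $e_X$, and compare with the corresponding diagrams in $\mathcal{K}l(T)$); it is routine, but you should say so rather than let the notation $(\overline{F}^{*},m^{\sharp},e^{\sharp})$ do the work.
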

The composition $\cdot$ in $\mathcal{K}l(TF^{*})=\mathcal{K}l(\overline{F}^{*})$ is given as follows. Let $\lambda:F^{*}T\implies TF^{*}$ denote the distributive law associated with the lifting $\overline{F}^{*}$ of $F^{*}$. The composition of  $f:X\to TF^{*}Y$, $g:Y\to TF^{*}Z$ in $\mathcal{K}l(TF^{*})$ is given by:
\begin{align*}
&g\cdot f = \mu_{F^{*}Z} \circ T\mu_{F^{*}Z}\circ  TTm_Z^\sharp \circ T \lambda_{F^{*}Z} \circ TF^{*} g \circ f=\\
&\mu_{F^{*}Z} \circ  T\mu_{F^{*}Z}\circ TT(\eta_Z\circ m_Z) \circ T \lambda_{F^{*}Z} \circ TF^{*} g \circ f = \\
&\mu_{F^{*}Z} \circ TTm_Z \circ T \lambda_{F^{*}Z} \circ TF^{*} g \circ f.
\end{align*}

\begin{example} \label{example:LTS_monad_extended}
The composition $\cdot$ in $\mathcal{K}l(\mathcal{P}(\Sigma^{*}\times \mathcal{I}d))$ is given by the following formula. For $f:X\to \mathcal{P}(\Sigma^{*}\times Y)$ and $g:Y\to \mathcal{P}(\Sigma^{*}\times Z)$ we have $g \cdot f:X\to \mathcal{P}(\Sigma^{*}\times Z)$:
$$
g\cdot f (x) = \{(s_1s_2,z)\mid x\stackrel{s_1}{\to}_f y \stackrel{s_2}{\to}_g z \text{ for some }y\in Y \text{ and }s_1,s_2\in \Sigma^{*}\}.
$$
We call the monad $\mathcal{P}(\Sigma^{*}\times \mathcal{I}d)$  \emph{free LTS monad}.
\end{example}

\begin{example}\label{example:na_composition}
The composition $\cdot$ in $\mathcal{K}l(\mathcal{P}(\Sigma^{*}\times \mathcal{I}d +  \Sigma^{*}))$ is given by the following formula. For $f:X\to \mathcal{P}(\Sigma^{*}\times Y +  \Sigma^{*})$ and $g:Y\to \mathcal{P}(\Sigma^{*}\times Z +  \Sigma^{*})$ we have $g \cdot f:X\to \mathcal{P}(\Sigma^{*}\times Z +  \Sigma^{*})$:
\begin{align*}
g\cdot f (x) =& \{(s_1s_2,z)\mid x\stackrel{s_1}{\to}_f y \stackrel{s_2}{\to}_g z \text{ for some }y\in Y \text{ and }s_1,s_2\in \Sigma^{*}\}\cup\\
 &\{s_1 s_2 \mid x\stackrel{s_1}{\to}_f y  \text{ and } s_2 \in g(y) \text{ for some }y\in Y\}\cup \{s_1 \mid s_1\in f(x) \}.
 \end{align*}
 We call $\mathcal{P}(\Sigma^{*}\times \mathcal{I}d +  \Sigma^{*})$ monad  \emph{free $\varepsilon$-NA monad} or \emph{$\varepsilon$-NA monad} in short.
\end{example}

We see that if we deal with functors of the form $T(F + \mathcal{I}d)$, where $T$ is a monad, given some mild assumptions on $T$ and $F$ we may deal with the silent and observable part of computation inside a monadic structure on the functor $TF_\tau$ itself or by embedding the functor $TF_\tau$ into the monad $TF^{*}$ by the natural transformation ${F_\tau}\implies {F}^{*}$. Therefore, from now on the term ``coalgebras with internal moves'' becomes synonymous to ``coalgebras over a monadic type''. Weak bisimulation and, as we will also see, trace equivalence are defined for coalgebras over monadic types, without the need for specifying visible and silent part of the structure.

\section{Weak bisimulation}  \label{section:weak_bisimulation}
\label{section:LTS_coalgebraically} 
In this section we recall classical definition(s) of weak bisimulation for labelled transition systems and coalgebraic constructions from \cite{Brengos13}. Weak bisimulation for labelled transition systems can be defined as a strong bisimulation on a saturated structure. Process of saturation can be described as taking the reflexive and transitive closure of a given structure w.r.t. the suitable composition and order. First of all we present a paragraph devoted to classical definitions of weak bisimulation for LTS. Then we show how Kleisli compositions from Examples \ref{example:LTS_monad} and \ref{example:LTS_monad_extended} play role  in the LTS saturation. These examples motivate the definition of an order saturation monad and weak bisimulation \cite{Brengos13}. What is essentially new in this section is the following. First of all we present a definition of weak bisimulation in terms of a kernel bisimulation on the saturated structure and not via lax- and oplax-homomorphisms in Aczel-Mendler style as it was done in \cite{Brengos13}. Second of all, the last paragraph compares the two generalizations of the strategies towards saturation from the point of view of weak bisimulation which was not done in \cite{Brengos13}.

\subsubsection{Weak bisimulation for LTS} Let $\alpha:X\to \mathcal{P}(\Sigma_\tau\times X)$ be a labelled transition system coalgebra. For $\sigma\in \Sigma_\tau$ and $s\in \Sigma^{*}$ define the relations  $\stackrel{\sigma}{\implies},\stackrel{s}{\to},\stackrel{s}{\implies}\subseteq X\times X$  by
\begin{align*}
\stackrel{\sigma}{\implies} &= \left \{\begin{array}{cc} (\stackrel{\tau}{\to})^{*}& \text{ if }\sigma = \tau \\
(\stackrel{\tau}{\to})^{*}\circ \stackrel{\sigma}{\to}\circ (\stackrel{\tau}{\to})^{*} & \text{ otherwise,}\end{array}\right. 
\quad \stackrel{s}{\to}  =\left \{\begin{array}{cc} \stackrel{\tau}{\to}& \text{ if }s=\varepsilon \\
\stackrel{\sigma_1}{\to}\circ \ldots \circ \stackrel{\sigma_n}{\to} & \text{for }s=\sigma_1\ldots \sigma_n,
\end{array}\right.\\
\stackrel{s}{\implies}  &=\left \{\begin{array}{cc} (\stackrel{\tau}{\to})^{*}& \text{ if }s \text{ is the empty word} \\
(\stackrel{\tau}{\to})^{*}\circ \stackrel{\sigma_1}{\to}\circ (\stackrel{\tau}{\to})^{*}\circ \ldots \circ (\stackrel{\tau}{\to})^{*}\circ \stackrel{\sigma_n}{\to}\circ (\stackrel{\tau}{\to})^{*}  & \text{for }s=\sigma_1\ldots \sigma_n\end{array}\right.
\end{align*} 
where, given any relation $R\subseteq X\times X$, the symbol $R^{*}$ denotes the reflexive and transitive closure of $R$.  We now present four different but equivalent definitions of weak bisimulation for LTS's. Due to limited space we do so in one definition block. 
\begin{definition}\cite{Milner,Milner3,Sangiorgi11}
\label{definition:LTS_weak_bisimulation}
 A relation $R\subseteq X\times X$ is called  \emph{weak bisimulation} on $\alpha$ if the following condition holds. If $(x,y)\in R$ then 
\begin{enumerate}[(i)]
\item for any $\sigma\in \Sigma_\tau$ the condition $x\stackrel{\sigma}{\to}x' \text{ implies } y\stackrel{\sigma}{\implies }y'$ \label{def:1}
\item for any $\sigma\in \Sigma_\tau$ the condition $x\stackrel{\sigma}{\implies }x' \text{ implies } y\stackrel{\sigma}{\implies }y'$ \label{def:2}
\item for any $s\in \Sigma^{*}$  the condition $x\stackrel{s}{\to}x' \text{ implies } y\stackrel{s}{\implies }y'$ \label{def:3}
\item for any $s\in \Sigma^{*}$  the condition $x\stackrel{s}{\implies}x' \text{ implies } y\stackrel{s}{\implies }y'$ \label{def:4}
\end{enumerate}
and $y'\in X$ such that $(x',y')\in R$ and a symmetric statement holds. 
\end{definition}

In this paper we will focus on Definitions \ref{definition:LTS_weak_bisimulation}.\ref{def:2} and \ref{definition:LTS_weak_bisimulation}.\ref{def:4} and their generalization. They both suggest that weak bisimulation can be defined as a strong bisimulation on a \emph{saturated model}. It is worth noting that in our previous paper we focused on analogues of Def \ref{definition:LTS_weak_bisimulation}.\ref{def:1} and \ref{definition:LTS_weak_bisimulation}.\ref{def:3} and comparison with Def \ref{definition:LTS_weak_bisimulation}.\ref{def:2} and \ref{definition:LTS_weak_bisimulation}.\ref{def:4} respectively (see \cite{Brengos13} for details). 

\subsubsection{Saturation for LTS coalgebraically}

Let us assume that $\cdot$ is a composition in $\mathcal{K}l(\mathcal{P}(\Sigma_\tau\times \mathcal{I}d))$ as in Example \ref{example:LTS_monad}. Given an LTS coalgebra $\alpha:X\to \mathcal{P}(\Sigma_\tau\times X)$ the saturated LTS $\alpha^{*}:X\to \mathcal{P}(\Sigma_\tau\times X)$ is obtained as follows:
$
\alpha^{*} = 1_X  \vee  \alpha  \vee  \alpha\cdot \alpha  \vee \ldots = \bigvee_{n=0,1,2\ldots} \alpha^n$,
where $\bigvee$ denotes supremum in the complete lattice $(\mathcal{P}(\Sigma_\tau\times X)^X,\leqslant)$, where the relation $\leqslant$ is given by $\alpha\leqslant \beta \iff \alpha(x)\subseteq \beta(x) \text{ for any }x\in X$.
We see that for $(\sigma,y)\in \Sigma_\tau\times X$:
$
(\sigma,y)\in \alpha^{*}(x) \text{ if and only if }x\stackrel{\sigma}{\implies}_\alpha y$. Weak bisimulation on $\alpha$ according to Definition \ref{definition:LTS_weak_bisimulation}.\ref{def:2}  is a strong bisimulation on $\alpha^{*}$.

If we now consider $\cdot $ to be composition in $\mathcal{K}l(\mathcal{P}(\Sigma^{*}\times \mathcal{I}d))$ as in Example \ref{example:LTS_monad_extended} for an LTS considered as a $\mathcal{P}(\Sigma^{*}\times \mathcal{I}d)$-coalgebra $\alpha:X\to \mathcal{P}(\Sigma^{*}\times X)$ define $\alpha^{*}:X\to \mathcal{P}(\Sigma^{*}\times X)$ to be
$
\alpha^{*} = 1_X  \vee  \alpha  \vee  \alpha\cdot \alpha  \vee \ldots = \bigvee_{n=0,1,2\ldots} \alpha^n$. Then $(s,y)\in \alpha^{*}(x) \text{ if and only if } x\stackrel{s}{\implies}_\alpha y$ for any $s\in \Sigma^{*}$.
Weak bisimulation from Def. \ref{definition:LTS_weak_bisimulation}.\ref{def:4} is a strong bisimulation on $\alpha^{*}$.

\subsubsection{Saturation for $T$-coalgebras} 
\label{subsection:saturation}
A monad $T$ whose Kleisli category is order-enriched is called \emph{ ordered $*$-monad }  or \emph{ ordered saturation monad } \cite{Brengos13} provided that in $\mathcal{K}l(T)$ for any morphism $\alpha: X \multimap X$ there is a morphism $\alpha^{*}:X\multimap X$ satisfying the following conditions:
\begin{enumerate}[(a)]
\item \label{axiom:1} $1\leqslant \alpha^{*}$,
\item \label{axiom:2} $\alpha\leqslant \alpha^{*}$,
\item \label{axioms:3} $\alpha^{*}\cdot \alpha^{*}\leqslant \alpha^{*}$,
\item \label{axiom:3.5} if $\beta:X\multimap X$ satisfies $1\leqslant \beta$, $\alpha\leqslant \beta$ and $\beta\cdot \beta\leqslant \beta$ then $\alpha^{*}\leqslant \beta$,
\item \label{axioms:4} for any $f:X\to Y$ in $\mathsf{C}$ and any $\beta:Y\multimap Y$ in $\mathcal{K}l(T)$ we have:
$$
f^{\sharp}\cdot \alpha \mathrel{\Box} \beta \cdot f^{\sharp} \implies f^{\sharp}\cdot \alpha^{*} \mathrel{\Box} \beta^{*} \cdot f^{\sharp} \text{ for } \mathrel{\Box} \in \{\leqslant,\geqslant\}.
$$
\end{enumerate}
For the rest of the section we assume that $T$ is an order saturation monad with the saturator operator $(-)^{*}$.

\begin{remark}
We could try and define $\alpha^{*}$ as the least fix point $\mu x.(1 \vee x\cdot \alpha)$. Indeed, if $T$ is e.g. complete join-semilatice enriched monad then the saturated structure is defined this way. We believe that our definition is slightly more general as it does not require for the mapping $x\mapsto 1 \vee x\cdot \alpha$ to be well defined. Intuitively however,
$\alpha^{*}$ should and will be associated with $\mu x.(1 \vee x\cdot \alpha)$.
\end{remark}

\begin{example}
The powerset monad $\mathcal{P}$ and the non-empty powerset monad $\mathcal{P}_{\neq \varnothing}$ are examples of order saturation monads \cite{Brengos13}.  The monads from Examples \ref{example:LTS_monad} and \ref{example:LTS_monad_extended} are order saturation monads \cite{Brengos13}. Also the $\mathcal{CM}$ monad of convex distributions described in \cite{Jacobs08} is an order saturation monad \cite{Brengos13}. Although we will not focus on $\mathcal{CM}$ in this paper it is a very important monad that  is used to model Segala systems, their trace semantics and  probabilistic weak bisimulations \cite{Brengos13,Jacobs08,Segala,SegalaThesis}. Any Kleene monad \cite{Gov} is also an order saturation monad \cite{Brengos13}.
\end{example}

Since $\mathcal{P}$, $\mathcal{P}(\Sigma_\tau\times \mathcal{I}d)$ and $\mathcal{P}(\Sigma^{*}\times \mathcal{I}d)$ are order saturation monads, the following question arises: is the saturation operator for LTS monads related to saturation in $\mathcal{K}l(\mathcal{P})$? The following theorem answers that question in general and shows the relation between a saturation operator in $\mathcal{K}l(T)$ and $\mathcal{K}l(TS)$ for a monad $\overline{S}$ on $\mathcal{K}l(T)$.
\begin{theorem}\label{theorem:S_star_monad} \cite{Brengos13}
Assume $S:\mathsf{C}\to \mathsf{C}$ lifts to $\overline{S}:\mathcal{K}l(T)\to\mathcal{K}l(T)$ and $(\overline{S},m,e)$ is a monad on $\mathcal{K}l(T)$. If  $\overline{S}$ is locally monotonic and satisfies the equation $$m_X\cdot \overline{S}[(m_X\cdot \overline{S}\alpha)^{*}\cdot e_X] = (m_X\cdot \overline{S}\alpha)^{*}$$ for any $\alpha:X\multimap  \overline{S}X$, then the monad $TS$ is an order saturation monad with the saturation operator $(-)^{\bigstar}$  given  by
$\alpha^{\bigstar} = (m_X\cdot \overline{S}\alpha)^{*}\cdot e_X$.
\end{theorem}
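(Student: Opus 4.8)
The plan is to verify the five order-saturation axioms (a)--(e) for the monad $TS$ directly, by transporting each of them along the Kleisli adjunction $\mathcal{K}l(T)\rightleftarrows\mathcal{K}l(\overline{S})$ to the corresponding axiom already available for the saturator $(-)^{*}$ on $\mathcal{K}l(T)$. I write $\cdot$ for composition in $\mathcal{K}l(T)$ and $\odot$ for composition in $\mathcal{K}l(TS)=\mathcal{K}l(\overline{S})$, so that a $\mathcal{K}l(\overline{S})$-endomorphism of $X$ is a $\mathcal{K}l(T)$-morphism $\alpha:X\multimap\overline{S}X$, $\beta\odot\alpha=m_X\cdot\overline{S}\beta\cdot\alpha$, and $e_X$ is the $\mathcal{K}l(\overline{S})$-identity on $X$. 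Abbreviating $\widehat{\alpha}:=m_X\cdot\overline{S}\alpha=U_{\overline{S}}(\alpha)$, the proposed operator reads $\alpha^{\bigstar}=\widehat{\alpha}^{*}\cdot e_X$. First I would record a short dictionary of identities that come for free from the monad laws and the lifting: $U_{\overline{S}}$ is a functor, so $\widehat{\beta\odot\alpha}=\widehat{\beta}\cdot\widehat{\alpha}$ and $\widehat{e_X}=1$; the unit law together with naturality of $e$ gives $\widehat{\alpha}\cdot e_X=\alpha$; local monotonicity of $\overline{S}$ makes both $U_{\overline{S}}$ and $\odot$ order preserving, which in particular shows that $\mathcal{K}l(\overline{S})$ is order enriched; and for a base morphism $f:X\to Y$ the lifting law $\overline{S}(f^{\sharp})=(Sf)^{\sharp}$ yields $f^{\sharp\sharp}\odot\alpha=\overline{S}(f^{\sharp})\cdot\alpha=(Sf)^{\sharp}\cdot\alpha$ and $\beta\odot f^{\sharp\sharp}=\beta\cdot f^{\sharp}$, where $f^{\sharp\sharp}$ denotes the image of $f$ in $\mathcal{K}l(TS)$.

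The crux, and the only place where the extra hypothesis enters, is the observation that the displayed equation is precisely the statement $\widehat{\alpha^{\bigstar}}=U_{\overline{S}}(\alpha^{\bigstar})=\widehat{\alpha}^{*}$. Granting this, axioms (a)--(c) follow by pre- or post-composing the $\mathcal{K}l(T)$-axioms for $(-)^{*}$ with $e_X$. For (a), from $1\leqslant\widehat{\alpha}^{*}$ one gets $e_X=1\cdot e_X\leqslant\widehat{\alpha}^{*}\cdot e_X=\alpha^{\bigstar}$. For (b), from $\widehat{\alpha}\leqslant\widehat{\alpha}^{*}$ and $\widehat{\alpha}\cdot e_X=\alpha$ one gets $\alpha\leqslant\alpha^{\bigstar}$. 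For (c), using $\widehat{\alpha^{\bigstar}}=\widehat{\alpha}^{*}$ one computes $\alpha^{\bigstar}\odot\alpha^{\bigstar}=\widehat{\alpha^{\bigstar}}\cdot\alpha^{\bigstar}=\widehat{\alpha}^{*}\cdot\widehat{\alpha}^{*}\cdot e_X$, and axiom (c) for $(-)^{*}$ gives $\widehat{\alpha}^{*}\cdot\widehat{\alpha}^{*}\leqslant\widehat{\alpha}^{*}$, hence $\alpha^{\bigstar}\odot\alpha^{\bigstar}\leqslant\widehat{\alpha}^{*}\cdot e_X=\alpha^{\bigstar}$.

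For the induction axiom (d) I would take $\beta$ with $e_X\leqslant\beta$, $\alpha\leqslant\beta$, $\beta\odot\beta\leqslant\beta$ and apply the monotone functor $U_{\overline{S}}$: using $\widehat{e_X}=1$, functoriality, and local monotonicity these become $1\leqslant\widehat{\beta}$, $\widehat{\alpha}\leqslant\widehat{\beta}$, $\widehat{\beta}\cdot\widehat{\beta}\leqslant\widehat{\beta}$, so axiom (d) for $(-)^{*}$ yields $\widehat{\alpha}^{*}\leqslant\widehat{\beta}$; post-composing with $e_X$ and using $\widehat{\beta}\cdot e_X=\beta$ gives $\alpha^{\bigstar}\leqslant\beta$. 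The naturality axiom (e) uses the same mechanism: starting from $f^{\sharp\sharp}\odot\alpha\mathrel{\Box}\beta\odot f^{\sharp\sharp}$ with $\mathrel{\Box}\in\{\leqslant,\geqslant\}$, apply $U_{\overline{S}}$ and the dictionary to get $(Sf)^{\sharp}\cdot\widehat{\alpha}\mathrel{\Box}\widehat{\beta}\cdot(Sf)^{\sharp}$, which is exactly the hypothesis of axiom (e) for $(-)^{*}$ applied to the base morphism $Sf:SX\to SY$; this yields $(Sf)^{\sharp}\cdot\widehat{\alpha}^{*}\mathrel{\Box}\widehat{\beta}^{*}\cdot(Sf)^{\sharp}$, and post-composing with $e_X$, then rewriting via $\overline{S}(f^{\sharp})=(Sf)^{\sharp}$, naturality of $e$, and $\widehat{\alpha}\cdot e_X=\alpha$, turns the two sides back into $f^{\sharp\sharp}\odot\alpha^{\bigstar}$ and $\beta^{\bigstar}\odot f^{\sharp\sharp}$, as required.

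I expect the main obstacle to be bookkeeping rather than conceptual: one has to keep the two Kleisli compositions, the two identities, and the several lifting and naturality identities straight, and in particular to nail down the dictionary lemmas $\widehat{\alpha}\cdot e_X=\alpha$ and $\overline{S}(f^{\sharp})=(Sf)^{\sharp}$, through which every axiom is reduced to its $\mathcal{K}l(T)$ counterpart. The genuine mathematical content sits in axiom (c): the hypothesis equation is exactly what collapses $\widehat{\alpha^{\bigstar}}$ to $\widehat{\alpha}^{*}$ and so makes $\alpha^{\bigstar}$ idempotent up to $\leqslant$; the remaining axioms are then formal transport along the order-preserving functor $U_{\overline{S}}$.
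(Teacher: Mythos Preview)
The paper does not actually supply a proof of this theorem: it is quoted verbatim from \cite{Brengos13} and no argument appears either in the body or in the appendix. So there is nothing in the present paper to compare your proposal against.

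That said, your argument is correct and is the natural one. The identification $\widehat{\alpha}=U_{\overline{S}}(\alpha)=m_X\cdot\overline{S}\alpha$ and the observation that the displayed hypothesis is literally $\widehat{\alpha^{\bigstar}}=\widehat{\alpha}^{*}$ are exactly the right organising principles; once these are in place, axioms (a)--(d) are immediate by monotone transport along the functor $U_{\overline{S}}$ and pre/post-composition with $e_X$, and your treatment of axiom (e) via $U_{\overline{S}}(f^{\sharp\sharp})=(Sf)^{\sharp}$ together with naturality of $e$ is clean. The dictionary lemmas you single out ($\widehat{\alpha}\cdot e_X=\alpha$, $\overline{S}(f^{\sharp})=(Sf)^{\sharp}$, $\widehat{e_X}=1$) are all standard consequences of the monad laws and the lifting, so there is no hidden difficulty there.
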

If $T=\mathcal{P}$ and $S$ is taken either to be $\Sigma_\tau\times \mathcal{I}d$ or $\Sigma^{*}\times \mathcal{I}d$, then the lifting $\overline{S}$ exists and is equipped with a monadic structure as in Section \ref{subsection:monads}. Moreover, $\overline{S}$ satisfies the assumptions of Theorem \ref{theorem:S_star_monad} \cite{Brengos13}. In other words, the LTS saturations for $\mathcal{P}(\Sigma_\tau\times \mathcal{I}d)$ and $\mathcal{P}(\Sigma^{*}\times \mathcal{I}d)$ are obtained respectively by
$$
(m_X'\cdot \overline{\Sigma_\tau}\alpha)^{*}\cdot e_X' \text{ and } (m_X^\sharp\cdot \overline{\Sigma}^{*}\alpha)^{*}\cdot e_X^\sharp.
$$
In sections to come we will deal with generalizations of these two saturations and check under which conditions they yield the same notion of weak bisimulation.

\subsubsection{Weak bisimulation for $T$-coalgebras}
The following slogan should be in our opinion considered the starting point to the theory of weak bisimulation for $T$-coalgebras: \emph{weak bisimulation on $\alpha:X\to TX$ =  bisimulation on $\alpha^{*}:X\to TX$}.

\begin{definition}
Let $\alpha:X\to TX$ be a $T$-coalgebra. A relation $X\stackrel{\pi_1}{\leftarrow} R \stackrel{\pi_2}{\to} X$ is \emph{weak bisimulation} on $\alpha$ if it is a bisimulation on $\alpha^{*}$.
\end{definition}

We see that the above definition coincides with the standard definition of weak bisimulation for LTS considered as $\mathcal{P}(\Sigma_\tau\times \mathcal{I}d)$- and $\mathcal{P}(\Sigma^{*}\times \mathcal{I}d)$-coalgebras.

\subsubsection{Weak bisimulation for $TF_\tau$- and $TF^{*}$-coalgebras}\label{subsection:weak_bisim_for_TF_and_TF_star}
This subsection will be devoted to comparing both approaches towards defining weak bisimulation for $TF_\tau$-coalgebras that generalize Def. \ref{definition:LTS_weak_bisimulation}.\ref{def:2} and \ref{definition:LTS_weak_bisimulation}.\ref{def:4} for LTS. Here, we additionally assume that $\mathcal{K}l(T)$ is a category with zero morphisms. Then we may either define a monadic structure on $TF_\tau$ or embed the functor into the monad $TF^{*}$. These two approaches applied for LTS give two different saturations, yet the weak bisimulations coincide. It is natural to suspect that given some mild assumptions it will also be the case in a more general setting. We will now list all the necessary ingredients.  

We assume  $(\overline{F_\tau},m',e')$ and $(\overline{F}^{*},m^\sharp,e^\sharp)$ are monads as in Section \ref{subsection:monads} and that both satisfy the assumptions of Theorem~\ref{theorem:S_star_monad} for the monad $\overline{S}$. For sake of simplicity and clarity of notation we will drop $^\sharp$ and write $(\overline{F}^{*},m,e)$ instead of $(\overline{F}^{*},m^\sharp,e^\sharp)$. The consequences of these assumptions are the following:
\begin{itemize}
  \item A natural transformation $\nu:\overline{F}\implies \overline{F}^{*}$ which arises by the definition of a free monad.
  \item A natural transformation $\iota^1:\overline{F}\implies \overline{F_\tau} = \overline{F +  \mathcal{I}d}=\overline{F} +  \mathcal{I}d$. This transformation is given regardless of the assumptions.
  \item Unique monad morphism 
  $h:(\overline{F}^{*},m,e)\multimap (\overline{F_\tau},m',e')$ in $\mathcal{K}l(T)$ making the first three diagrams commute:
$$
\xymatrix@-1pc{
\overline{F} \ar@{-o}[r]^{\nu}\ar@{-o}[dr]_{\iota^1} & \overline{F}^{*}\ar@{-o}[d]^h\\
& \overline{F_\tau}
}\qquad 
\xymatrix@-1pc{
\mathcal{I}d \ar@{-o}[r]^{e}\ar@{-o}[dr]_{e'=\iota^2} & \overline{F}^{*}\ar@{-o}[d]^h\\
& \overline{F_\tau}
}\qquad 
\xymatrix@-1pc{
\overline{F}^{*}\overline{F}^{*}\ar@{-o}[r]^{m}\ar@{-o}[d]_{hh} & \overline{F}^{*}\ar@{-o}[d]^h\\
\overline{F_\tau}\overline{F_\tau}\ar@{-o}[r]^{m'} & \overline{F_\tau}
}\qquad 
\xymatrix@-1pc{
\overline{F_\tau} \ar@{-o}[r]^{[\nu,e]}\ar@{-o}[dr]_{[\iota^1,e']=id} & \overline{F}^{*}\ar@{-o}[d]^h\\
& \overline{F_\tau}
}
$$  
 Commutativity of the first two diagrams implies commutativity of the forth.  Existence and uniqueness of $h$ follows by the fact that $\overline{F}^{*}$ is a free monad over $\overline{F}$ and $\iota^1:\overline{F}\implies \overline{F_\tau}$ is a natural transformation.
 
 \item The monads $TF_\tau$ and $TF^{*}$ are order saturation monads. The saturation operators $(-)^\bigstar$ and $(-)^\divideontimes$ for $TF_\tau$- and $TF^{*}$-coalgebras resp. are given as follows. Let $\alpha:X\multimap \overline{F_\tau}X$ (i.e. $\alpha:X\to TF_\tau X$) and $\beta:Y\multimap \overline{F}^{*}Y$ (i.e. $\beta:Y\to TF^{*}Y$). We have:
\begin{align*}
\alpha^\bigstar = (m_X'\cdot \overline{F_\tau}\alpha)^{*}\cdot e_X' \qquad \text{ and }\qquad \beta^\divideontimes = (m_Y\cdot \overline{F}^{*}\beta)^{*}\cdot e_Y.
\end{align*} 
\end{itemize}  
\begin{example}
Let $T=\mathcal{P}$ and $F_\tau = \Sigma_\tau\times \mathcal{I}d$, $F^{*} =\Sigma^{*}\times \mathcal{I}d$. The morphism $h_X:\overline{\Sigma^{*}}X\multimap \overline{\Sigma_\tau}X$ is given by:
$$
h_X:\Sigma^{*}\times X \to \mathcal{P}(\Sigma_\tau \times X); (s,x)\mapsto \left \{\begin{array}{cc}\{(\tau,x)\} & \text{if } |s|=0, \\ \{(s,x)\} & \text{if }|s|=1,\\ \varnothing & \text{otherwise.} \end{array}\right.
$$
\end{example}

Consider any coalgebra $\alpha:X\multimap \overline{F_\tau}X$ and let $\underline{\alpha}:X\multimap \overline{F}^{*}X$ be given by $\underline{\alpha} = [\nu_X,e_X] \cdot \alpha$. Note that this is the same coalgebra as in the paragraph on monadic structure on $TF^{*}$ in Section \ref{subsection:monads}. Here, however, it is defined in terms of the composition in $\mathcal{K}l(T)$ and not $\mathsf{C}$, and all superscripts $^\sharp$ are dropped to simplify the notation. By commutativity of the last diagram above we have:
$$h_X\cdot \underline{\alpha} = h_X\cdot [\nu_X,e_X] \cdot \alpha = \alpha.$$ 
We will now try to compare bisimulations for $\alpha^{\bigstar}$ and $\underline{\alpha}^\divideontimes$. In case of labelled transition systems a relation is a bisimulation on $\alpha^{\bigstar}$ if and only if it is a bisimulation on $\underline{\alpha}^\divideontimes$. Below we verify how general is this statement and what conditions are required to be satisfied for it to remain true.

\begin{lemma}\label{lemma:stars}
Assume that for any $\phi:\overline{F}^{*}X\multimap \overline{F}^{*}X$ and $\psi:\overline{F_\tau}X\multimap \overline{F_\tau}X$ if  $\psi \cdot h_{X} = h_{X}\cdot \phi$ then  $\psi^{*} \cdot h_{X} = h_{X}\cdot \phi^{*}$. In this case we have $h_X\cdot \underline{\alpha}^\divideontimes = \alpha^{\bigstar}$.
\end{lemma}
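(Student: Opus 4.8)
The plan is to reduce the claim to the single conditional hypothesis supplied in the statement by producing the right pair of endomorphisms and checking that they intertwine along $h_X$. Writing $\phi := m_X\cdot \overline{F}^{*}\underline{\alpha}\colon \overline{F}^{*}X\multimap \overline{F}^{*}X$ and $\psi := m_X'\cdot \overline{F_\tau}\alpha\colon \overline{F_\tau}X\multimap \overline{F_\tau}X$, the two saturators unfold as $\underline{\alpha}^\divideontimes = \phi^{*}\cdot e_X$ and $\alpha^{\bigstar}=\psi^{*}\cdot e_X'$ by the formulas recorded before the lemma. Thus the goal $h_X\cdot\underline{\alpha}^\divideontimes=\alpha^{\bigstar}$ becomes $h_X\cdot\phi^{*}\cdot e_X=\psi^{*}\cdot e_X'$, and everything follows once I establish two facts: (a) the intertwining $\psi\cdot h_X=h_X\cdot\phi$, so that the hypothesis applies and yields $\psi^{*}\cdot h_X=h_X\cdot\phi^{*}$; and (b) the unit relation $h_X\cdot e_X=e_X'$. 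Fact (b) is immediate, being exactly the commutativity of the second triangle $h\cdot e=e'$ for the monad morphism $h$.

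The core of the argument is the intertwining (a), a short diagram chase in $\mathcal{K}l(T)$. Starting from $h_X\cdot\phi=h_X\cdot m_X\cdot\overline{F}^{*}\underline{\alpha}$, I first rewrite $h_X\cdot m_X$ via the multiplication law of the monad morphism $h$ (third diagram), giving $h_X\cdot m_X=m_X'\cdot\overline{F_\tau}(h_X)\cdot h_{\overline{F}^{*}X}$. Next I push $h$ past $\overline{F}^{*}\underline{\alpha}$ by naturality of $h\colon\overline{F}^{*}\implies\overline{F_\tau}$ on the Kleisli arrow $\underline{\alpha}\colon X\multimap\overline{F}^{*}X$, namely $h_{\overline{F}^{*}X}\cdot\overline{F}^{*}\underline{\alpha}=\overline{F_\tau}(\underline{\alpha})\cdot h_X$. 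Combining these and using functoriality of $\overline{F_\tau}$ yields $h_X\cdot\phi=m_X'\cdot\overline{F_\tau}(h_X\cdot\underline{\alpha})\cdot h_X$. Finally I invoke the identity $h_X\cdot\underline{\alpha}=\alpha$ established immediately before the lemma (a consequence of $h\cdot[\nu,e]=id$), obtaining $h_X\cdot\phi=m_X'\cdot\overline{F_\tau}(\alpha)\cdot h_X=\psi\cdot h_X$, which is precisely (a).

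With (a) in hand the hypothesis delivers $\psi^{*}\cdot h_X=h_X\cdot\phi^{*}$, and then $h_X\cdot\underline{\alpha}^\divideontimes=h_X\cdot\phi^{*}\cdot e_X=\psi^{*}\cdot h_X\cdot e_X=\psi^{*}\cdot e_X'=\alpha^{\bigstar}$, closing the proof. I do not anticipate a genuine obstacle: the only delicate point is keeping the bookkeeping of the monad-morphism laws and the naturality square straight (all composites live in $\mathcal{K}l(T)$, and $\overline{F_\tau}$, $\overline{F}^{*}$ here act on Kleisli arrows via the associated distributive laws), since the saturator-commutation property itself is assumed rather than derived. The whole weight of the lemma sits in the hypothesis; the remaining content is the verification that $\phi$ and $\psi$ are intertwined by $h_X$.
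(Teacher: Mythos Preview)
Your proposal is correct and follows essentially the same approach as the paper's proof: both establish the intertwining $\psi\cdot h_X = h_X\cdot\phi$ for $\phi=m_X\cdot\overline{F}^{*}\underline{\alpha}$ and $\psi=m'_X\cdot\overline{F_\tau}\alpha$ by combining naturality of $h$, the multiplication law of the monad morphism, and the identity $h_X\cdot\underline{\alpha}=\alpha$, then apply the hypothesis and finish via $h_X\cdot e_X=e'_X$. The paper packages the intertwining verification into a single commutative diagram in $\mathcal{K}l(T)$, whereas you spell out the same chase equationally, but the content is identical.
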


\begin{remark}
Note that the assumption in Lemma \ref{lemma:stars} about the natural transformation $h$ is crucial even though $T$ is assumed to be an order saturation monad. Assumption (\ref{axioms:4}) in the definition of order saturatiom monad does not guarantee that $h$ satisfies the desired property since it is not in general of the form $h'^{\sharp}$ for some $h':F^{*}X\to F_\tau X$ in $\mathsf{C}$. However, if $T$ is a Kleene monad \cite{Brengos13,Gov} then this assumption is always satisfied. The powerset monad $\mathcal{P}$ is an example of a Kleene monad.
\end{remark}
The following theorem follows directly from the above lemma.
\begin{theorem}\label{theorem:bisimulation_sat_1}
Assume that for any $\phi:\overline{F}^{*}X\multimap \overline{F}^{*}X$ and $\psi:\overline{F_\tau}X\multimap \overline{F_\tau}X$ if  $\psi \cdot h_{X} = h_{X}\cdot \phi$ then  $\psi^{*} \cdot h_{X} = h_{X}\cdot \phi^{*}$. 
Any bisimulation on $\underline{\alpha}^\divideontimes$ is a bisimulation on $\alpha^{\bigstar}$.
\end{theorem}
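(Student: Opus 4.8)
The plan is to read off the statement from Lemma~\ref{lemma:stars}. Under the present hypothesis that lemma yields $h_X\cdot \underline{\alpha}^\divideontimes = \alpha^{\bigstar}$, and unravelling the Kleisli composition of $\underline{\alpha}^\divideontimes:X\to TF^{*}X$ with $h_X:F^{*}X\to TF_\tau X$ turns this into an identity in $\mathsf{C}$: writing $\mathfrak{h}_X := \mu_{F_\tau X}\circ T h_X$, we have
$$
\alpha^{\bigstar} \;=\; \mu_{F_\tau X}\circ T h_X \circ \underline{\alpha}^\divideontimes \;=\; \mathfrak{h}_X\circ \underline{\alpha}^\divideontimes .
$$
The key observation driving the proof is that the maps $\mathfrak{h}_X$ assemble into a natural transformation $\mathfrak{h}:TF^{*}\implies TF_\tau$ of \emph{endofunctors on} $\mathsf{C}$, and that any such natural transformation transports kernel bisimulations along the functor it induces on coalgebras.

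First I would verify naturality of $\mathfrak{h}$. Since $h:\overline{F}^{*}\implies\overline{F_\tau}$ is in particular a natural transformation in $\mathcal{K}l(T)$, I would instantiate its naturality square at a lifted morphism $g^{\sharp}:X\multimap Y$ coming from $g:X\to Y$ in $\mathsf{C}$; using that the lifting commutes with the embedding $^{\sharp}$ (so $\overline{S}(g^{\sharp})=(Sg)^{\sharp}$) together with the unit laws $\mu\circ T\eta = \mathit{id}$ and $\mu\circ \eta_T = \mathit{id}$, this collapses to $TF_\tau g\circ h_X = h_Y\circ F^{*}g$. A short computation invoking the naturality of $\mu$ then upgrades this to $TF_\tau g\circ \mathfrak{h}_X = \mathfrak{h}_Y\circ TF^{*}g$, which is exactly the naturality of $\mathfrak{h}$. (Alternatively this is automatic: $h$ is a monad morphism in $\mathcal{K}l(T)$, which corresponds under the composite-monad construction recalled in Section~\ref{section:introduction} to a monad morphism $TF^{*}\implies TF_\tau$ on $\mathsf{C}$, and monad morphisms are natural transformations.)

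With naturality in hand I would invoke the standard fact that a natural transformation $\mathfrak{h}:G\implies H$ of endofunctors induces a functor $\Phi:\mathsf{C}_{G}\to\mathsf{C}_{H}$ acting as the identity on carriers and on morphisms, via $\Phi(Z,\gamma)=(Z,\mathfrak{h}_Z\circ\gamma)$; a $G$-homomorphism $f$ remains an $H$-homomorphism after $\Phi$ precisely by the naturality square of $\mathfrak{h}$. Now take a bisimulation $R$ on $\underline{\alpha}^\divideontimes$, witnessed by a $TF^{*}$-coalgebra $\gamma:Z\to TF^{*}Z$, homomorphisms $f,g:(X,\underline{\alpha}^\divideontimes)\to(Z,\gamma)$, and the pullback exhibiting $R$ as the pullback of $X\xrightarrow{f}Z\xleftarrow{g}X$. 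Applying $\Phi$, the maps $f,g$ become $TF_\tau$-coalgebra homomorphisms $(X,\mathfrak{h}_X\circ\underline{\alpha}^\divideontimes)\to(Z,\mathfrak{h}_Z\circ\gamma)$; since $\mathfrak{h}_X\circ\underline{\alpha}^\divideontimes = \alpha^{\bigstar}$ by the displayed identity, they are homomorphisms from $\alpha^{\bigstar}$ into $\mathfrak{h}_Z\circ\gamma$. Because $\Phi$ leaves carriers and maps untouched, the pullback in $\mathsf{C}$ of $X\xrightarrow{f}Z\xleftarrow{g}X$ is still exactly $R$, so $R$ is a kernel bisimulation on $\alpha^{\bigstar}$.

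I expect the only genuine content to lie in the descent of $h$ to a natural transformation $\mathfrak{h}$ of the composite endofunctors; everything afterwards is the formal transport of kernel bisimulations along a natural transformation, which is insensitive to the monad structure and hence routine. In particular no saturation axiom is needed beyond what Lemma~\ref{lemma:stars} already supplies, so the theorem is, as claimed, a direct corollary.
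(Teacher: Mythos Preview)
Your proof is correct and follows the same strategy as the paper: invoke Lemma~\ref{lemma:stars} and then transport the bisimulation witness along the natural transformation induced by the monad morphism $h$. The paper's proof additionally replaces the witnessing coalgebra $\gamma$ by its saturation $\gamma^{\divideontimes}$ before applying $h$, but as your argument shows this intermediate step is unnecessary---post-composing with $\mathfrak{h}_Z$ (equivalently $h_Z$ in $\mathcal{K}l(T)$) already turns $\gamma$ itself into a suitable $TF_\tau$-coalgebra witness---so your version is slightly more direct.
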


Our aim now will be to prove the converse.

\begin{lemma}\label{lemma:two_stars_lemma}
We have
$
\underline{\alpha}^\divideontimes \leqslant (\underline{\alpha^\bigstar})^{\divideontimes}$.
\end{lemma}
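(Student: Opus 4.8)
The plan is to reduce the inequality to two elementary monotonicity facts together with the defining property $\alpha\leqslant\alpha^{\bigstar}$ of the saturator on $TF_\tau$. The central observation is that the saturation operator of \emph{any} order saturation monad is monotone. Indeed, suppose $\gamma\leqslant\delta$ are endomorphisms in the relevant Kleisli category. The saturated morphism $\delta^{*}$ satisfies $1\leqslant\delta^{*}$, $\delta\leqslant\delta^{*}$ and $\delta^{*}\cdot\delta^{*}\leqslant\delta^{*}$ by axioms (a), (b) and (c); since $\gamma\leqslant\delta\leqslant\delta^{*}$, the morphism $\beta:=\delta^{*}$ meets all three hypotheses of the minimality axiom (d) relative to $\gamma$, whence $\gamma^{*}\leqslant\delta^{*}$. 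Applying this reasoning to the order saturation monad $TF^{*}$ shows that $(-)^{\divideontimes}$ is monotone.

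First I would record that the map $\beta\mapsto\underline{\beta}=[\nu_X,e_X]\cdot\beta$ is monotone: it is post-composition in $\mathcal{K}l(T)$ by the fixed morphism $[\nu_X,e_X]$, and composition preserves the order on hom-sets in an order enriched category. Next I would invoke axiom (b) for the order saturation monad $TF_\tau$, which gives $\alpha\leqslant\alpha^{\bigstar}$. Combining the two monotonicity facts yields the chain $\underline{\alpha}\leqslant\underline{\alpha^{\bigstar}}$, and then, applying monotonicity of $(-)^{\divideontimes}$ once more, $\underline{\alpha}^{\divideontimes}\leqslant(\underline{\alpha^{\bigstar}})^{\divideontimes}$, which is exactly the claim.

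Since every step is a direct consequence of the order saturation monad axioms and of the compatibility of composition with the order, I do not expect a genuine obstacle. The only point requiring care is the monotonicity of the saturator, and that follows cleanly from the minimality axiom (d) as sketched above. In particular, no appeal to the naturality axiom (e), nor to the explicit formulas $\alpha^{\bigstar}=(m_X'\cdot\overline{F_\tau}\alpha)^{*}\cdot e_X'$ and $\beta^{\divideontimes}=(m_Y\cdot\overline{F}^{*}\beta)^{*}\cdot e_Y$, is needed for this direction.
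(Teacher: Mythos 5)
Your proof is correct and follows essentially the same route as the paper's: both derive $\alpha\leqslant\alpha^{\bigstar}$ from the saturation axioms, pass to $\underline{\alpha}\leqslant\underline{\alpha^{\bigstar}}$ by monotonicity of composition, and conclude by applying $(-)^{\divideontimes}$. Your explicit derivation of the saturator's monotonicity from the minimality axiom (d) merely spells out a step the paper leaves implicit.
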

\begin{remark}
Before we state the next result we have to make one essential remark. Note that the technical condition concerning the transformation $[\nu,e]$ in the lemma below would follow from $[\nu,e]$ being a monad morphism.  However, $[\nu,e]:\overline{F_\tau}\implies \overline{F}^{*}$ is \emph{not} a monad morphism. It does not satisfy the 2nd axiom of a monad morphism. To see this consider $T=\mathcal{P}$, $(\overline{\Sigma_\tau},m',e')$, $(\overline{\Sigma}^{*},m,e)$ as in Examples  \ref{example:LTS_monad} and \ref{example:LTS_monad_extended} and a visible label $a\in \Sigma$. We have
\begin{align*}
&[\nu_X,e_X]\cdot m'_X(a,a,x) = \varnothing \text{ and }  \\
&m_X\cdot [\nu_{\overline{\Sigma^{*}}X},e_{\overline{\Sigma^{*}}X}]\cdot \overline{\Sigma_\tau} [\nu_X,e_X] (a,a,x) =\{(aa,x)\}.
\end{align*}
\end{remark}
\begin{lemma}\label{lemma:sub_commute}
Assume $[\nu_X,e_X]\cdot m_X'\cdot \overline{F_\tau} \alpha \leqslant  m_X\cdot \overline{F}^{*} \underline{\alpha} \cdot [\nu_X,e_X]$. Then $\underline{\alpha^\bigstar} \leqslant \underline{\alpha}^\divideontimes$.
\end{lemma}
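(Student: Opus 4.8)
The plan is to turn the hypothesis into a lax commutation of the closure operators and then postcompose with the unit, the whole argument resting on the observation that $[\nu_X,e_X]$, in contrast to the monad morphism $h$ discussed in the remark preceding Lemma~\ref{lemma:stars}, \emph{is} of the form $f^{\sharp}$ for a morphism $f$ of $\mathsf{C}$. Concretely, write $A := m_X'\cdot\overline{F_\tau}\alpha$ and $B := m_X\cdot\overline{F}^{*}\underline{\alpha}$, so that $\alpha^{\bigstar} = A^{*}\cdot e_X'$ and $\underline{\alpha}^{\divideontimes} = B^{*}\cdot e_X$ by the saturator formulas, while $\underline{\alpha^{\bigstar}} = [\nu_X,e_X]\cdot\alpha^{\bigstar}$ by the definition of $\underline{(-)}$. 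In this notation the assumption of the lemma reads exactly
$$
[\nu_X,e_X]\cdot A \;\leqslant\; B\cdot[\nu_X,e_X].
$$

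First I would record that $[\nu_X,e_X]\colon\overline{F_\tau}X\multimap\overline{F}^{*}X$ equals $f^{\sharp}$ for $f=[\nu_X,e_X]\colon F_\tau X\to F^{*}X$ in $\mathsf{C}$: since $\sharp$ is a coproduct-preserving functor and both $\nu$ and $e$ arise as $\sharp$-lifts of the corresponding transformations of $\mathsf{C}$, their cotuple $[\nu,e]$ is itself a $\sharp$-lift. This is precisely the property that fails for $h$, which points in the opposite direction $\overline{F}^{*}\multimap\overline{F_\tau}$, and it is exactly what makes axiom~(\ref{axioms:4}) available here but not there.

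Next I would apply axiom~(\ref{axioms:4}) of the order saturation monad $T$ with $\mathrel{\Box}\,=\,\leqslant$, taking $f=[\nu_X,e_X]$, the endomorphism $A$ on $F_\tau X$ and the endomorphism $B$ on $F^{*}X$. From $f^{\sharp}\cdot A\leqslant B\cdot f^{\sharp}$ the axiom yields $f^{\sharp}\cdot A^{*}\leqslant B^{*}\cdot f^{\sharp}$, that is,
$$
[\nu_X,e_X]\cdot A^{*}\;\leqslant\; B^{*}\cdot[\nu_X,e_X].
$$
Postcomposing with $e_X'$ (composition preserves the enrichment order) and using that $e_X'=\iota^2$ together with the cotupling identity $[\nu_X,e_X]\cdot\iota^2=e_X$, I would conclude
$$
\underline{\alpha^{\bigstar}} = [\nu_X,e_X]\cdot A^{*}\cdot e_X' \;\leqslant\; B^{*}\cdot[\nu_X,e_X]\cdot e_X' = B^{*}\cdot e_X = \underline{\alpha}^{\divideontimes}.
$$

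The only genuinely delicate point is the first step: one must be certain that the reflexive–transitive closure axiom really does apply to $[\nu_X,e_X]$, i.e.\ that it is a base-category lift, so that the subtlety flagged in the remark (the non-liftability of $h$, which forces the extra hypothesis in Lemma~\ref{lemma:stars}) does not reappear in disguise. Once that identification is made, the remainder is a single invocation of axiom~(\ref{axioms:4}) together with the elementary coproduct identity $[\nu_X,e_X]\cdot\iota^2=e_X$.
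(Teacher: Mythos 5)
Your proof is correct and follows essentially the same route as the paper's: the paper likewise observes that $[\nu_X,e_X]=[\nu'_X,\varepsilon_X]^{\sharp}$ is a $\sharp$-lift of a base-category morphism, applies axiom~(\ref{axioms:4}) to the hypothesis to get $[\nu_X,e_X]\cdot(m'_X\cdot\overline{F_\tau}\alpha)^{*}\leqslant(m_X\cdot\overline{F}^{*}\underline{\alpha})^{*}\cdot[\nu_X,e_X]$, and then postcomposes with $e'_X$ using $[\nu_X,e_X]\cdot e'_X=e_X$. Your explicit flagging of why the $\sharp$-lift identification is the crux (and why it fails for $h$) matches the paper's own remark.
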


\begin{theorem}\label{theorem:star_star_bisimulation}
Let $\alpha$ satisfy the inequality from the assumptions of the previous statement.
Any bisimulation on ${\alpha^\bigstar}$ is a bisimulation on $\underline{\alpha}^\divideontimes$.
\end{theorem}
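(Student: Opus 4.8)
The plan is to establish the stated direction by chaining the two lemmas above with two elementary facts: that postcomposition with a natural transformation preserves kernel bisimulations, and that a strong bisimulation is always a weak bisimulation. Throughout, write $\underline{(-)}$ for the operation sending a $\overline{F_\tau}$-coalgebra $\beta$ to $\underline{\beta}=[\nu,e]\cdot\beta$, i.e. $\underline{\beta}=T[\nu_X,e_X]\circ\beta$ in $\mathsf{C}$. The first move is a reduction: it suffices to prove that every bisimulation on $\alpha^\bigstar$ is a bisimulation on $(\underline{\alpha^\bigstar})^\divideontimes$, because under the standing hypothesis on $\alpha$ one has the equality $(\underline{\alpha^\bigstar})^\divideontimes=\underline{\alpha}^\divideontimes$.

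To obtain this equality I would first record that the saturator is monotone and idempotent, both of which follow from the saturation axioms (\ref{axiom:1})--(\ref{axiom:3.5}): monotonicity, $\beta_1\leqslant\beta_2\Rightarrow\beta_1^\divideontimes\leqslant\beta_2^\divideontimes$, by applying the universal property (\ref{axiom:3.5}) to $\beta_1$ with witness $\beta_2^\divideontimes$, and idempotency, $(\beta^\divideontimes)^\divideontimes=\beta^\divideontimes$, from (\ref{axiom:2}) together with (\ref{axiom:3.5}). Lemma \ref{lemma:sub_commute} then gives $\underline{\alpha^\bigstar}\leqslant\underline{\alpha}^\divideontimes$, whence $(\underline{\alpha^\bigstar})^\divideontimes\leqslant(\underline{\alpha}^\divideontimes)^\divideontimes=\underline{\alpha}^\divideontimes$, while Lemma \ref{lemma:two_stars_lemma} supplies the reverse inequality $\underline{\alpha}^\divideontimes\leqslant(\underline{\alpha^\bigstar})^\divideontimes$; together these give the claimed equality $(\underline{\alpha^\bigstar})^\divideontimes=\underline{\alpha}^\divideontimes$.

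Next I would move the given bisimulation across the embedding. Let $R$ be a bisimulation on $\alpha^\bigstar$, witnessed by a $TF_\tau$-coalgebra $\gamma$ together with homomorphisms $f,g\colon\alpha^\bigstar\to\gamma$ such that $R$ is the pullback of $f$ and $g$. Since $T[\nu,e]\colon TF_\tau\implies TF^{*}$ is a natural transformation, naturality turns $f$ and $g$ into $TF^{*}$-homomorphisms $\underline{\alpha^\bigstar}\to\underline{\gamma}$; as the span $f,g$ and hence its pullback $R$ are unchanged, $R$ is a bisimulation on $\underline{\alpha^\bigstar}$. This step uses only naturality, with no monicity of $[\nu,e]$ required, which is precisely why the converse inclusion in Theorem \ref{theorem:bisimulation_sat_1} needed a separate hypothesis whereas this direction does not.

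The hard part will be the final step: showing that a bisimulation on $\underline{\alpha^\bigstar}$ is a bisimulation on its saturation $(\underline{\alpha^\bigstar})^\divideontimes$, i.e. that \emph{strong bisimulation implies weak bisimulation} for the monad $TF^{*}$. Concretely I would lift $f,g$ to homomorphisms $(\underline{\alpha^\bigstar})^\divideontimes\to\underline{\gamma}^\divideontimes$. The key manipulation is to rewrite the base-category homomorphism equation $TF^{*}(f)\circ\underline{\alpha^\bigstar}=\underline{\gamma}\circ f$ as the Kleisli identity $f^{\sharp}\cdot\underline{\alpha^\bigstar}=\underline{\gamma}\cdot f^{\sharp}$ in $\mathcal{K}l(TF^{*})$, a routine but fiddly computation using the monad laws of $TF^{*}$, and then to invoke axiom (\ref{axioms:4}) of an order saturation monad with both $\mathrel{\Box}=\leqslant$ and $\mathrel{\Box}=\geqslant$, so that the equality is preserved by saturation: $f^{\sharp}\cdot(\underline{\alpha^\bigstar})^\divideontimes=\underline{\gamma}^\divideontimes\cdot f^{\sharp}$, and likewise for $g$. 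Translating back, $f,g\colon(\underline{\alpha^\bigstar})^\divideontimes\to\underline{\gamma}^\divideontimes$ are homomorphisms, so $R$ is their pullback and hence a bisimulation on $(\underline{\alpha^\bigstar})^\divideontimes=\underline{\alpha}^\divideontimes$, as required. The two genuinely delicate points are getting the Kleisli reformulation of the homomorphism condition exactly right and using (\ref{axioms:4}) in both orders to produce an equality rather than a one-sided inequality; by comparison the reduction and the naturality argument are routine.
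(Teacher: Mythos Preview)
Your proposal is correct and follows essentially the same route as the paper: establish $(\underline{\alpha^\bigstar})^\divideontimes=\underline{\alpha}^\divideontimes$ from Lemmas~\ref{lemma:two_stars_lemma} and~\ref{lemma:sub_commute} together with idempotency of the saturator, transport the bisimulation from $\alpha^\bigstar$ to $\underline{\alpha^\bigstar}$ via naturality of $[\nu,e]$, and then use the order saturation monad axiom~(\ref{axioms:4}) to lift the witnessing homomorphisms to the saturated coalgebras. The paper's proof is terser and leaves the last step (that a bisimulation on $\underline{\alpha^\bigstar}$ is one on $(\underline{\alpha^\bigstar})^\divideontimes$) implicit, but you have correctly unpacked it.
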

\begin{proof}
We have $\alpha \leqslant \alpha^\bigstar$ and hence $\underline{\alpha}\leqslant \underline{\alpha^\bigstar}$. This, together with Lemma \ref{lemma:sub_commute}, implies that  $\underline{\alpha}^{\divideontimes} \leqslant (\underline{\alpha^\bigstar})^\divideontimes\leqslant (\underline{\alpha}^\divideontimes)^\divideontimes=\underline{\alpha}^\divideontimes$.
Assume $X\stackrel{\pi_1}{\leftarrow} R\stackrel{\pi_2}{\rightarrow} X$ is a bisimulation on $\alpha^\bigstar$. It is also a bisimulation on $\underline{\alpha^\bigstar}$. Finally, since $\underline{\alpha}^\divideontimes = (\underline{\alpha^\bigstar})^\divideontimes$ the relation $R$ is a bisimulation on $\underline{\alpha}^\divideontimes$.
\end{proof}


\begin{theorem}\label{cor:weak_bisims}
Assume that cotupling $[-,-]$ in $\mathcal{K}l(T)$ is monotonic w.r.t. both arguments and the zero morphisms $0_{X,Y}:X\multimap Y$ are the least elements of the posets $Hom_{\mathcal{K}l(T)}(X,Y)$. Then any bisimulation on ${\alpha}^\bigstar$ is a bisimulation on $\underline{\alpha}^\divideontimes$.
\end{theorem}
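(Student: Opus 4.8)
The statement is an immediate consequence of Theorem~\ref{theorem:star_star_bisimulation}: it suffices to show that, under our two hypotheses, \emph{every} coalgebra $\alpha:X\multimap\overline{F_\tau}X$ satisfies the inequality
$$[\nu_X,e_X]\cdot m_X'\cdot\overline{F_\tau}\alpha\ \leqslant\ m_X\cdot\overline{F}^{*}\underline{\alpha}\cdot[\nu_X,e_X]$$
demanded by Lemma~\ref{lemma:sub_commute}; the conclusion on bisimulations then follows verbatim. The plan is to deduce this from the single ``lax monad morphism'' inequality
$$[\nu_X,e_X]\cdot m_X'\ \leqslant\ m_X\cdot[\nu,e]_{\overline{F}^{*}X}\cdot\overline{F_\tau}[\nu,e]_X,\qquad(\star)$$
i.e. to show that, although $[\nu,e]$ fails to be a monad morphism (as the preceding Remark records), it always lies \emph{below} one with respect to $\leqslant$. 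Granting $(\star)$, precomposition with $\overline{F_\tau}\alpha$ (monotone) produces the left-hand side of the desired inequality; on the right I rewrite $m_X\cdot[\nu,e]_{\overline{F}^{*}X}\cdot\overline{F_\tau}[\nu,e]_X\cdot\overline{F_\tau}\alpha=m_X\cdot[\nu,e]_{\overline{F}^{*}X}\cdot\overline{F_\tau}\underline{\alpha}=m_X\cdot\overline{F}^{*}\underline{\alpha}\cdot[\nu_X,e_X]$, using $\underline{\alpha}=[\nu,e]_X\cdot\alpha$ together with functoriality and the naturality square of $[\nu,e]:\overline{F_\tau}\implies\overline{F}^{*}$ at $\underline{\alpha}$. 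Thus the whole theorem reduces to $(\star)$.

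To prove $(\star)$ I would split its domain $\overline{F_\tau}\,\overline{F_\tau}X=\overline{F}\,\overline{F_\tau}X+\overline{F_\tau}X$ and check the inequality after precomposing with each coprojection; this is legitimate because in $\mathcal{K}l(T)$ coproducts are inherited from $\mathsf{C}$, so a morphism out of a coproduct is the cotuple of its two restrictions, and monotonicity of cotupling lets $\leqslant$ be verified summandwise. On the $\iota^2$-summand both sides collapse to $[\nu,e]_X$: by the definition of $m'$ in Theorem~\ref{theorem:monad_on_TF} one has $m_X'\cdot\iota^2=id$, while $\overline{F_\tau}[\nu,e]_X\cdot\iota^2=\iota^2\cdot[\nu,e]_X$ and $[\nu,e]_{\overline{F}^{*}X}\cdot\iota^2=e_{\overline{F}^{*}X}$ reduce the right-hand side to $m_X\cdot e_{\overline{F}^{*}X}\cdot[\nu,e]_X=[\nu,e]_X$ by the monad unit law $m_X\cdot e_{\overline{F}^{*}X}=id$.

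On the $\iota^1$-summand the definition of $m'$ gives $m_X'\cdot\iota^1=\iota^1\cdot\overline{F}[0,id]$, so the left side restricts to $\nu_X\cdot\overline{F}[0,id]$ and the right side to $m_X\cdot\nu_{\overline{F}^{*}X}\cdot\overline{F}[\nu_X,e_X]$. Writing the free $\overline{F}$-algebra structure on $\overline{F}^{*}X$ as $\kappa_X:\overline{F}\,\overline{F}^{*}X\multimap\overline{F}^{*}X$ (so $i_X^{\sharp}=[\kappa_X,e_X]$) and recalling the standard free-monad identities $\nu_X=\kappa_X\cdot\overline{F}e_X$ and $\kappa_X=m_X\cdot\nu_{\overline{F}^{*}X}$, together with $e_X\cdot 0=0$ in the category with zero morphisms, both restrictions simplify and $(\star)$ on this summand becomes exactly
$$\kappa_X\cdot\overline{F}[0,e_X]\ \leqslant\ \kappa_X\cdot\overline{F}[\nu_X,e_X].$$
Here the two hypotheses do all the work: since $0_{\overline{F}X,X}$ is the least element of its hom-poset we have $0\leqslant\nu_X$, monotonicity of cotupling upgrades this to $[0,e_X]\leqslant[\nu_X,e_X]$, and local monotonicity of $\overline{F}$ followed by monotonicity of composition yields the claim.

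The main obstacle, and the only genuinely non-formal point, is isolating $(\star)$ and recognising that the failure of $[\nu,e]$ to be a monad morphism is harmless for saturation: the discrepancy exhibited in the Remark (where $\varnothing$ on the left faces $\{(aa,x)\}$ on the right) is not accidental but sits precisely at the zero morphism $[0,id]$ built into $m'$, so replacing that zero by $\nu_X$ can only enlarge the value. Once this is seen, everything else is bookkeeping in the coproduct structure of $m'$ and the universal identities of the free monad $\overline{F}^{*}$; the only ingredient used beyond the two stated assumptions is local monotonicity of $\overline{F}$, which is part of the standing assumptions on the lifting, and the two hypotheses are exactly what turns ``$0$ is replaced by something larger'' into a valid $\leqslant$.
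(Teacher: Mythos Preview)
Your proof is correct and follows essentially the same route as the paper's: both reduce to the hypothesis of Lemma~\ref{lemma:sub_commute} via Theorem~\ref{theorem:star_star_bisimulation}, then verify that inequality by splitting over the summands of $\overline{F_\tau}\,\overline{F_\tau}X$, getting equality on the $\iota^2$-component and the genuine inequality $\kappa_X\cdot\overline{F}[0,e_X]\leqslant\kappa_X\cdot\overline{F}[\nu_X,e_X]$ on the $\iota^1$-component from $0\leqslant\nu_X$, monotone cotupling, and local monotonicity of $\overline{F}$. Your extra step of first abstracting out $(\star)$ as a ``lax monad morphism'' inequality and then recovering the $\alpha$-version by functoriality and naturality of $[\nu,e]$ is a tidy repackaging, but the diagram chase in the paper amounts to the same computation; in particular the paper's two large diagrams are exactly your two summand checks, and its final diagram is your naturality step.
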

\begin{remark}
The powerset monad $\mathcal{P}$ satisfies assumptions of the above theorem. It is worth mentioning that the $\mathcal{CM}$ monad used to model Segala systems does not satisfy them as the zero morphisms in $\mathcal{K}l(\mathcal{CM})$ are not least elements of the partially ordered hom-sets \cite{Jacobs08}. The monad $\mathcal{CM}$ deserves a separate treatment and we leave this for future research. 
\end{remark}

\section{Trace semantics for coalgebras with internal moves}
\label{section:trace_semantics}
The aim of this section is to present some ideas on how to approach the notion of trace semantics for structures with invisible moves. As mentioned before in order to distinguish the trace semantics for coalgebras with and without silent steps we will often use the term \emph{weak trace semantics} or \emph{trace semantics for structures with internal moves} to refer to the former.  

Before we go into details we start this section by recalling a basic example of  trace semantics for $\varepsilon$-NA's \cite{HopUll}. 
 
 \begin{definition}\label{definition:e-na_trace}
Given a non-deterministic automaton with $\varepsilon$-transitions $\alpha:X\to \mathcal{P}(\Sigma_\varepsilon\times X +  1)$ its \emph{trace semantics} is a morphism $\text{tr}_\alpha :X\to \mathcal{P}(\Sigma^{*})$ which maps any state $x\in X$ to the set of words over $\Sigma$ it accepts. To be more precise, for a word $w\in \Sigma^{*}$ we have $w\in \text{tr}_\alpha(x)$ provided that either $w=\varepsilon$ and $\checked \in \alpha(x)$ or $w=a_1\ldots a_n$ for $a_i\in \Sigma$ and there is $x'\in X$ such that 
$$x(\stackrel{\varepsilon}{\to})^{*}\circ\stackrel{a_1}{\to}\circ  (\stackrel{\varepsilon}{\to})^{*}\ldots (\stackrel{\varepsilon}{\to})^{*}\circ\stackrel{a_n}{\to}\circ  (\stackrel{\varepsilon}{\to})^{*}x'$$
with $\checked \in \alpha(x')$. 
\end{definition}
The above definition is an instance of what we call a ``bottom-up" approach towards trace semantics for non-deterministic automata with internal moves. This approach 
considers $\varepsilon$ steps as invisible steps that can wander around a structure freely. In other words, from our perspective $\varepsilon$-steps that are used in this definition are what they should be, i.e. are part of the unit of the $\varepsilon$-NA monad.
There is a second obvious approach towards defining trace semantics for $\varepsilon$-NA's. We call this approach ``top-down", since at first we treat $\varepsilon$ steps artificially as if they were standard visible steps. Given an $\varepsilon$-NA $\alpha:X\to \mathcal{P}(\Sigma_\varepsilon\times X +  1)$ we find its trace $\text{tr}'_\alpha:X\to \mathcal{P}((\Sigma\cup \{\varepsilon\})^{*})$ and then map all words from $(\Sigma\cup\{\varepsilon\})^{*}$ to words in $\Sigma^{*}$ by removing all occurrences of the $\varepsilon$ label. As a result we obtain the same trace as in Def. \ref{definition:e-na_trace}. Since in many cases we know how to find finite trace semantics for coalgebras with only visible steps \cite{HasJacSok} it is easy to generalize the ``top-down" approach to coalgebras with internal activities. This is exactly how authors of \cite{HasJacSok_jap,SilWester} do it in their papers. We, however, will present a bottom-up approach towards weak trace semantics that works for a large family of coalgebras whose type is a monad.   

\subsubsection{Coalgebraic view on weak trace semantics for $\varepsilon$-NA}
In this subsection we focus on coalgebras for the monad $\mathcal{P}(\Sigma^{*}\times X +  \Sigma^{*})$. Recall that by Example \ref{example_NA_translate} any $\varepsilon$-NA coalgebra $\alpha:X\to \mathcal{P}(\Sigma_\varepsilon \times X +  1)$ can be considered a $\mathcal{P}(\Sigma^{*}\times X +  \Sigma^{*})$-coalgebra. For simplicity and clarity of notation put $F=\Sigma\times \mathcal{I}d +  1$ and $F^{*} = \Sigma^{*}\times \mathcal{I}d +  \Sigma^{*}$. 
Let us list two basic facts concerning $\varepsilon$-NA monad:
\begin{itemize}
\item The lifting $\overline{F}^{*}:\mathcal{K}l(\mathcal{P})\to\mathcal{K}l(\mathcal{P})$ is locally continuous \cite{HasJacSok}.
\item The $\varepsilon$-NA monad  $\mathcal{P}F^{*}$ is $\textbf{Cppo}$-enriched. This follows by Theorem~\ref{theorem_Cppo_continuous}.
\end{itemize}

 For any $\alpha:X\multimap X$ in $\mathcal{K}l(\mathcal{P}F^{*})$ (i.e. $\alpha:X\to \mathcal{P}F^{*}X$) define the following mapping $\text{tr}_\alpha:X\multimap \varnothing$ (i.e. $\text{tr}_\alpha:X\to \mathcal{P}(\Sigma^{*})$):
$$\text{tr}_\alpha = \bigvee_{n\in \mathbb{N}} \perp \cdot \alpha^n,$$
where $\perp:X\multimap \varnothing$ is given by  $\perp:X\to \mathcal{P}(\Sigma^{*});x\mapsto \varnothing$ and $\cdot$ denotes the composition in $\mathcal{K}l(\mathcal{P}F^{*})$ as in Example \ref{example:na_composition}.
It is simple to see that $\text{tr}_\alpha$ is the least morphism in $Hom_{\mathcal{K}l(\mathcal{P}F^{*})}(X,\varnothing)=Hom_{\mathsf{Set}}(X,\mathcal{P}(\Sigma^{*}))$ satisfying $\text{tr}_\alpha = \text{tr}_\alpha \cdot \alpha$. In other words, $$\text{tr}_\alpha = \mu x. x\cdot \alpha.$$
Recursively, if we put $\text{tr}_0 = \perp$ and $\text{tr}_n = \text{tr}_{n-1} \cdot \alpha$ then $\text{tr}_\alpha = \bigvee_n \text{tr}_n$. 
\begin{example}
Let $\Sigma = \{a,b\}$ and let $\alpha:X\to \mathcal{P}(\Sigma_\varepsilon\times X +  1)$ be given by the following diagram ($\varepsilon$-labels are omitted).  We have  $\text{tr}_0:X\to \mathcal{P}(\Sigma^{*}),x\mapsto \varnothing$ and
\begin{multicols}{2}
$
\xymatrix{
x \ar[r] & y\ar@(ur,dr)^a \ar[dl]^b \\
z_{\checked} \ar[u] &
}
$ 
\columnbreak
\begin{align*}
\text{tr}_1: & x\mapsto \varnothing, y\mapsto\varnothing, z\mapsto \{\varepsilon\},\\
\text{tr}_2: & x\mapsto \varnothing, y\mapsto\{b\}, z\mapsto \{\varepsilon\},\\
\text{tr}_3: & x\mapsto \{b\}, y\mapsto\{ab,b\}, z\mapsto \{\varepsilon\},\\
\text{tr}_4: & x\mapsto \{ab,b\}, y\mapsto\{aab,ab,b\}, \\ & z\mapsto \{b,\varepsilon\}
\end{align*}
\end{multicols}

\end{example}

The following result can be shown by straightforward verification.
\begin{theorem}\label{theorem:trace_semantics_least}
For any $\varepsilon$-NA coalgebra considered as $\mathcal{P}(\Sigma^{*}\times \mathcal{I}d +  \Sigma^{*})$-coalgebra the trace semantics morphism from Def. \ref{definition:e-na_trace} and the morphism $\text{tr}_\alpha$ above coincide.
\end{theorem}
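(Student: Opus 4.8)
The plan is to unfold the iterates $\text{tr}_n$ explicitly using the Kleisli composition of the $\varepsilon$-NA monad from Example~\ref{example:na_composition}, to read off from them a purely combinatorial description in terms of accepting runs of bounded length, and only then to pass to the supremum $\text{tr}_\alpha=\bigvee_n\text{tr}_n$ and match it against the paths of Definition~\ref{definition:e-na_trace}. Throughout I would write $\underline{\alpha}:X\to\mathcal{P}(\Sigma^{*}\times X+\Sigma^{*})$ for the coalgebra obtained from the given $\varepsilon$-NA $\alpha:X\to\mathcal{P}(\Sigma_\varepsilon\times X+1)$ via Example~\ref{example_NA_translate}, and I would record two facts about it: a single transition $x\stackrel{s}{\to}_{\underline{\alpha}}y$ carries a string $s$ that is either a one-letter word $a\in\Sigma$ (coming from a visible step $x\stackrel{a}{\to}y$ of $\alpha$) or the empty word (coming from an internal step $x\stackrel{\varepsilon}{\to}y$), and the ``accepting part'' of $\underline{\alpha}(x)$ equals $A_x=\{\varepsilon\}$ if $\checked\in\alpha(x)$ and $\varnothing$ otherwise.

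First I would compute a single iterate. Taking the inner coalgebra to be $\underline{\alpha}$ and the outer morphism to be $\text{tr}_n:X\to\mathcal{P}(\Sigma^{*})$ (a morphism into the Kleisli object $\varnothing$) in the composition formula of Example~\ref{example:na_composition}, the first summand vanishes because the codomain is $\varnothing$, leaving
$$
\text{tr}_{n+1}(x)=\{\,s_1 s_2 \mid x\stackrel{s_1}{\to}_{\underline{\alpha}}y \text{ and } s_2\in\text{tr}_n(y)\,\}\cup A_x,
$$
with base case $\text{tr}_1(x)=(\perp\cdot\underline{\alpha})(x)=A_x$. Then I would prove by induction on $n$ the invariant that $w\in\text{tr}_{n}(x)$ holds if and only if there is an accepting run
$$
x=x_0\stackrel{s_1}{\to}_{\underline{\alpha}}x_1\stackrel{s_2}{\to}_{\underline{\alpha}}\cdots\stackrel{s_m}{\to}_{\underline{\alpha}}x_m,\qquad \checked\in\alpha(x_m),\qquad s_1\cdots s_m=w,
$$
of length $m\leqslant n-1$. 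The base case records exactly the runs of length $0$ via the disjunct $A_x$, and the inductive step follows by prepending the initial transition $x\stackrel{s_1}{\to}_{\underline{\alpha}}y$ to a run of length $\leqslant n-1$ supplied by the induction hypothesis.

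Finally, since $\mathcal{K}l(\mathcal{P}F^{*})$ is $\textbf{Cppo}$-enriched (Theorem~\ref{theorem_Cppo_continuous}) and its suprema are computed pointwise, the equality $\text{tr}_\alpha=\bigvee_n\text{tr}_n$ yields $\text{tr}_\alpha(x)=\bigcup_n\text{tr}_n(x)$, so $w\in\text{tr}_\alpha(x)$ iff $w$ is accepted by an accepting run of $\underline{\alpha}$ from $x$ of \emph{some} finite length. It then remains to identify these runs with the paths of Definition~\ref{definition:e-na_trace}: each step of the run is either a visible letter step or an $\varepsilon$-step of $\alpha$, and concatenating the strings $s_i$ deletes precisely the empty words contributed by the $\varepsilon$-steps while keeping the visible letters in order, so a run accepting $w=a_1\cdots a_k$ is exactly a path $x(\stackrel{\varepsilon}{\to})^{*}\stackrel{a_1}{\to}(\stackrel{\varepsilon}{\to})^{*}\cdots\stackrel{a_k}{\to}(\stackrel{\varepsilon}{\to})^{*}x'$ with $\checked\in\alpha(x')$, and the case $w=\varepsilon$ corresponds to a run of $\varepsilon$-steps only. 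The step I expect to be the main obstacle is reconciling the bounded-length invariant with the unbounded $\varepsilon$-closures $(\stackrel{\varepsilon}{\to})^{*}$ of Definition~\ref{definition:e-na_trace}: a fixed word may require runs of arbitrarily large length because of intervening internal steps, so no single $\text{tr}_n$ captures it, but every individual accepting run is finite and hence lies in some $\text{tr}_n$; passing to the union is exactly what closes this gap and completes the identification.
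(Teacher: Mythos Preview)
Your argument is correct and is exactly the ``straightforward verification'' the paper alludes to but does not carry out: the paper gives no proof of this theorem beyond that phrase, so there is nothing to compare against, and your inductive description of $\text{tr}_n$ via the Kleisli composition of Example~\ref{example:na_composition} followed by passage to the pointwise union is the natural way to do it. One small remark: your final identification for $w=\varepsilon$ (a run of $\varepsilon$-steps ending in a final state) is the standard notion, and in fact slightly more generous than the literal reading of the $w=\varepsilon$ clause in Definition~\ref{definition:e-na_trace}; this is an imprecision in the definition rather than in your proof, and your version is the one that makes the theorem true.
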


\subsubsection{Weak coalgebraic trace semantics via fixed point operator} 

We see that for $\varepsilon$-NA's their weak trace semantics is obtained as the least fixed point of the assignment $x\mapsto x\cdot \alpha$ in $\mathcal{K}l(\mathcal{P}(\Sigma^{*}\times \mathcal{I}d+\Sigma^{*}))$. Interestingly, such a fixed point is not unique. 
\begin{example}
Let $\Sigma =\{a\}, X=\{x\}$ and let $\varepsilon$-NA $\alpha:X\to \mathcal{P}(\Sigma_\varepsilon\times X+1)$ be defined by the following diagram: $\xymatrix{x\ar@(ur,dr)^\varepsilon}$.  It is easy to check that the morphism $g:X\to \mathcal{P}(\Sigma^{*});x\mapsto \{a\}$ satisfies $g = g\cdot \alpha$ and it is not the least fixed point since the least fixed point is given by $\text{tr}_\alpha(x) = \varnothing$.
\end{example}

Here we generalize the ideas presented in the previous subsection to $T$-coalgebras. It should be noted at the very beginning that this section should serve as merely a starting point for future research.

Let us first focus on a known approach for defining trace semantics via coinduction in Kleisli category \cite{HasJacSok} and translating these results to our setting. In \cite{HasJacSok} the authors present trace semantics definition via coinduction for $TF$-coalgebras, where $T$ is a monad and $F$ satisfies some reasonable assumptions. In our setting however, we do not consider a special functor $F$ or in other words $F=\mathcal{I}d$ and our coalgebras are $T$-coalgebras.  Consider the category $\mathcal{K}l(T)_{\mathcal{I}d}$ of $\mathcal{I}d$-coalgebras in $\mathcal{K}l(T)$.
Note that any $T$-coalgebra $\alpha:X\to TX$ is $\alpha:X\multimap X$ and is a member of $\mathcal{K}l(T)_{\mathcal{I}d}$. Based on the approach from \cite{HasJacSok} trace semantics of $\alpha$ should be obtained via coinduction in  $\mathcal{K}l(T)_{\overline{F}}$. In our setting however, i.e. when $F=\mathcal{I}d$, the category $\mathcal{K}l(T)_{\mathcal{I}d}$ rarely admits the terminal object. For instance if we consider our $\varepsilon$-NA monad $\mathcal{P}(\Sigma^{*}\times \mathcal{I}d  +  \Sigma^{*})$, the category of $\mathcal{I}d$-coalgebras  $\mathcal{K}l(\mathcal{P}(\Sigma^{*}\times \mathcal{I}d +  \Sigma^{*}))_{\mathcal{I}d}$ has no terminal object. However, it still makes sense to talk about trace for coalgebras for the monad $\mathcal{P}(\Sigma^{*}\times \mathcal{I}d  +  \Sigma^{*})$. We did it via the least fixed point of the assignment $x\mapsto x\cdot \alpha$. In the general case we do it via uniform fixed point operator \cite{SimPlot}. 

Assume that $\mathsf{C}$ is a category with the initial object $0$ (this object is also initial in $\mathcal{K}l(T)$). A \emph{fixed point operator} $\text{f}$ on $\mathcal{K}l(T)$ is a family of morphisms:
$$
\text{f}:Hom_{\mathcal{K}l(T)}(X,X)\to Hom_{\mathcal{K}l(T)}(X,0)
$$
satisfying $\text{f}(\alpha) \cdot \alpha = \text{f}(\alpha)$ for any $\alpha:X\multimap X$. 
A fixed point operator $\text{f}$ on $\mathcal{K}l(T)$ is \emph{uniform} w.r.t. $(-)^\sharp:\mathsf{C}\to \mathcal{K}l(T)$ \cite{SimPlot} if 
$$h^\sharp \cdot \alpha = \beta \cdot h^\sharp \implies \text{f}(\beta) \cdot h^\sharp = \text{f}(\alpha)$$ 
for any $\alpha:X\multimap X$, $\beta:Y\multimap Y$ in $\mathcal{K}l(T)$ and $h:X\to Y$ in $\mathsf{C}$.
Coalgebraically speaking, the premise of the above implication says that the morphism $h$ is a homomorphism between coalgebras $\alpha:X\to TX$ and $\beta:Y\to TY$ in $\mathsf{C}_T$.  We call a uniform fixed point operator on $\mathcal{K}l(T)$  a \emph{coalgebraic trace operator} and we denote it by $\text{tr}_{(-)}$. 

\begin{theorem}\label{theorem:cppo_coalgebraic_trace}
Assume that $\mathcal{K}l(T)$ is a $\textbf{Cppo}$-enriched category and assume that for any $f:X\to Y$ in $\mathsf{C}$ we have $\perp\cdot f^\sharp = \perp$. For $\alpha:X\multimap X$ define $\text{tr}_\alpha:X\multimap 0$ by $\text{tr}_\alpha= \mu x.(x\cdot \alpha) = \bigvee_{n\in \mathbb{N}}\perp\cdot \alpha^n$.
Then $\text{tr}_{(-)}$ is a coalgebraic trace operator on $\mathcal{K}l(T)$.
\end{theorem}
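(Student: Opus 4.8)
The plan is to check the two defining clauses of a coalgebraic trace operator separately: that $\text{tr}_{(-)}$ is a fixed point operator, i.e. $\text{tr}_\alpha\cdot\alpha=\text{tr}_\alpha$ for every $\alpha:X\multimap X$, and that it is uniform w.r.t. $(-)^\sharp$. Throughout I would work with the family $\text{tr}_n:=\perp\cdot\alpha^n$ in $Hom_{\mathcal{K}l(T)}(X,0)$, so that $\text{tr}_0=\perp$, $\text{tr}_{n+1}=\text{tr}_n\cdot\alpha$ and $\text{tr}_\alpha=\bigvee_n\text{tr}_n$. The first thing to record is that this is an ascending $\omega$-chain: since $\perp$ is the least element of the hom-poset we have $\text{tr}_0=\perp\leqslant\perp\cdot\alpha=\text{tr}_1$, and since composition is monotonic (the category is $\textbf{Cppo}$-enriched, hence order enriched) applying $(-)\cdot\alpha$ repeatedly yields $\text{tr}_n\leqslant\text{tr}_{n+1}$ by induction. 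In particular the supremum defining $\text{tr}_\alpha$ exists.

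For the fixed point equation I would invoke the $\textbf{Cppo}$ axiom $(\bigvee_i f_i)\cdot g=\bigvee_i(f_i\cdot g)$ with $g=\alpha$:
$$\text{tr}_\alpha\cdot\alpha=\Big(\bigvee_n\text{tr}_n\Big)\cdot\alpha=\bigvee_n(\text{tr}_n\cdot\alpha)=\bigvee_n\text{tr}_{n+1}=\bigvee_{n\geqslant 1}\text{tr}_n.$$
Because $\text{tr}_0=\perp$ is the least element, adjoining it to the indexed family leaves the supremum unchanged, so $\bigvee_{n\geqslant 1}\text{tr}_n=\bigvee_{n\geqslant 0}\text{tr}_n=\text{tr}_\alpha$, which is the required fixed point property.

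The substantive clause is uniformity. Assume $h^\sharp\cdot\alpha=\beta\cdot h^\sharp$ for some $h:X\to Y$ in $\mathsf{C}$. First I would commute $h^\sharp$ past all powers of $\beta$ by a routine induction: the base case is $\beta^0\cdot h^\sharp=h^\sharp=h^\sharp\cdot\alpha^0$, and the step is $\beta^{n+1}\cdot h^\sharp=\beta\cdot(\beta^n\cdot h^\sharp)=\beta\cdot h^\sharp\cdot\alpha^n=h^\sharp\cdot\alpha\cdot\alpha^n=h^\sharp\cdot\alpha^{n+1}$, giving $\beta^n\cdot h^\sharp=h^\sharp\cdot\alpha^n$ for all $n$. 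Then, using the standing hypothesis $\perp\cdot h^\sharp=\perp$ and the same $\textbf{Cppo}$ distributivity axiom (now with $g=h^\sharp$), I would compute
$$\text{tr}_\beta\cdot h^\sharp=\Big(\bigvee_n\perp\cdot\beta^n\Big)\cdot h^\sharp=\bigvee_n\big(\perp\cdot\beta^n\cdot h^\sharp\big)=\bigvee_n\big(\perp\cdot h^\sharp\cdot\alpha^n\big)=\bigvee_n\perp\cdot\alpha^n=\text{tr}_\alpha,$$
which is exactly the uniformity condition $\text{tr}_\beta\cdot h^\sharp=\text{tr}_\alpha$.

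I do not anticipate a real obstacle: the argument is a standard Kleene-style fixed point computation. The one step that genuinely requires the extra assumption of the theorem is the collapse $\perp\cdot h^\sharp=\perp$ inside the uniformity calculation — without it the factor $h^\sharp$ could not be absorbed after being commuted past the powers of $\beta$, and $\text{tr}_{(-)}$ would still be a fixed point operator but would fail to be uniform. The only other thing to keep honest is that each manipulation of $\bigvee_n$ is licensed by the $\textbf{Cppo}$ axioms (existence of suprema of ascending chains and their preservation by composition), which is precisely why the ascending-chain observation must be established first.
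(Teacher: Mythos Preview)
Your proof is correct: the ascending-chain check, the fixed-point computation via right distributivity of suprema, and the uniformity argument (commuting $h^\sharp$ past powers of $\beta$ and then absorbing it with the hypothesis $\perp\cdot h^\sharp=\perp$) are all sound. The paper does not actually supply a proof of this theorem --- it is stated without argument, presumably as a routine Kleene-style verification --- so your write-up is precisely the standard argument one would expect and there is nothing to compare against.
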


It may not be instantly clear for the reader why we choose uniformity as a property of a coalgebraic trace operator. Uniformity is a powerful notion which, in some forms, determines the least fixed point to be the unique uniform fixed point operator \cite{SimPlot}.  
For the $\varepsilon$-NA monad $\mathcal{P}(\Sigma^{*}\times \mathcal{I}d+\Sigma^{*})$ the least fixed point operator is a uniform fixed point operator w.r.t. $$^\sharp:\mathsf{Set}\to \mathcal{K}l(\mathcal{P}(\Sigma^{*}\times \mathcal{I}d+\Sigma^{*})).$$ However, as we will see further on (Theorem \ref{theorem:traced_fixed_point} and Example \ref{example:NA_trace}), it is uniform also with respect to a richer category than $\mathsf{Set}$, namely, it is uniform w.r.t.:
$$^\sharp:\mathcal{K}l(\mathcal{P}(\Sigma^{*}\times \mathcal{I}d))\to \mathcal{K}l(\mathcal{M}_1)\cong\mathcal{K}l(\mathcal{P}(\Sigma^{*}\times \mathcal{I}d+\Sigma^{*})).$$

Uniqueness of a uniform fixed point operator on $\mathcal{K}l(T)$ can be imposed by inital algebra = final coalgebra coincidence in the base category $\mathsf{C}$ \cite{SimPlot}. This coincidence is the core of generic coalgebraic trace semantics theory \cite{HasJacSok}. This is why we believe that the uniform fixed point operators can and will serve as an extension of the generic coalgebraic trace semantics to weak trace semantics.

 We end this section with a result that links weak trace semantics for $\varepsilon$-NA's to uniform traced monoidal categories in the sense of Joyal et al. \cite{JSV}. However, instead of a uniform categorical trace operator on a monoidal category with binary coproducts and initial object we will equivalently work with a uniform Conway operator \cite{Hasegawa,JSV}. The following theorem (modulo the uniformity) can be found in \cite{BentonHyland}.
\begin{theorem}\label{theorem:traced_fixed_point}
Assume $\mathsf{C}$ is equipped with a uniform Conway operator $$(-)_{X,A}^{\dagger}:Hom(X,X +  A)\to Hom(X,A).$$ Let $A$ be an object in $\mathsf{C}$ and $\mathcal{M}_A = \mathcal{I}d +  A$ the exception monad on $\mathsf{C}$. Then the operator $\text{tr}_{(-)}:Hom_{\mathcal{K}l(\mathcal{M}_A)}(X,X)\to Hom_{\mathcal{K}l(\mathcal{M}_A)}(X,0)$ defined by $\text{tr}_\alpha = \alpha^\dagger$ for $\alpha:X\to X +  A$ in $\mathsf{C}$ (or equivalently $\alpha:X\multimap X$ in $\mathcal{K}l(\mathcal{M}_A)$) is a coalgebraic trace operator on the category $\mathcal{K}l(\mathcal{M}_A)$ which is uniform w.r.t. $^\sharp:\mathsf{C}\to \mathcal{K}l(\mathcal{M}_A)$.
\end{theorem}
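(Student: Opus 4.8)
The plan is to verify the two defining properties of a coalgebraic trace operator, namely the fixed-point law and uniformity, directly from the axioms of a uniform Conway operator, after translating every Kleisli composite in $\mathcal{K}l(\mathcal{M}_A)$ back into $\mathsf{C}$. Throughout I identify $\mathcal{M}_A 0 = 0 +  A$ with $A$ via the canonical isomorphism, so that $\text{tr}_\alpha = \alpha^\dagger:X\to A$ genuinely represents a morphism $X\multimap 0$, and I use the explicit unit $\eta_X = \iota^1$ and multiplication $\mu_X = [\mathrm{id}_{X+A},\iota^2]$ of the exception monad to unfold the Kleisli composite $g\cdot f = \mu\circ \mathcal{M}_A(g)\circ f$. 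The computations below are then pure coproduct bookkeeping.

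First I would establish the fixed-point property $\text{tr}_\alpha\cdot \alpha = \text{tr}_\alpha$. Unfolding the Kleisli composite with $\mathcal{M}_A(\text{tr}_\alpha) = \alpha^\dagger +  \mathrm{id}_A$ and $\mu_0 = [\mathrm{id}_{0+A},\iota^2]$, the composite collapses, under the identification $0 +  A\cong A$ (which turns $\mu_0$ into $[\mathrm{id}_A,\mathrm{id}_A]$), to
$$\text{tr}_\alpha\cdot \alpha = [\mathrm{id}_A,\mathrm{id}_A]\circ(\alpha^\dagger +  \mathrm{id}_A)\circ \alpha = [\alpha^\dagger,\mathrm{id}_A]\circ \alpha.$$
This is exactly the right-hand side of the fixed-point identity of a Conway operator, which asserts $\alpha^\dagger = [\alpha^\dagger,\mathrm{id}_A]\circ \alpha$; hence $\text{tr}_\alpha\cdot\alpha = \alpha^\dagger = \text{tr}_\alpha$. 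Modulo uniformity this is essentially the observation of \cite{BentonHyland}.

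Next I would prove uniformity with respect to $^\sharp:\mathsf{C}\to\mathcal{K}l(\mathcal{M}_A)$. The key step is to unfold the hypothesis $h^\sharp\cdot \alpha = \beta\cdot h^\sharp$. Writing $h^\sharp = \iota^1\circ h$ and simplifying with the explicit $\mu$, the left-hand side reduces to $(h +  \mathrm{id}_A)\circ \alpha$ and the right-hand side to $\beta\circ h$; thus the hypothesis is equivalent to the equation $(h +  \mathrm{id}_A)\circ \alpha = \beta\circ h$ in $\mathsf{C}$, i.e. to $h$ being an $\mathcal{M}_A$-coalgebra homomorphism from $\alpha$ to $\beta$, exactly as noted in the discussion preceding the theorem. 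An analogous unfolding gives $\text{tr}_\beta\cdot h^\sharp = \beta^\dagger\circ h$ (here $(\text{tr}_\beta +  \mathrm{id}_A)\circ \iota^1 = \iota^1\circ \text{tr}_\beta$ and $\mu_0\circ\iota^1 = \mathrm{id}$). The desired conclusion $\text{tr}_\beta\cdot h^\sharp = \text{tr}_\alpha$ therefore reduces to $\beta^\dagger\circ h = \alpha^\dagger$, which is precisely the uniformity axiom of the uniform Conway operator applied to the equation $(h +  \mathrm{id}_A)\circ \alpha = \beta\circ h$.

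I expect the main difficulty to be conceptual rather than computational: keeping the identification $0 +  A\cong A$ consistent on both sides of each Kleisli composite, and arranging the coprojections and cotuplings so that the Conway axioms apply verbatim rather than up to some unexamined coherence. It is worth emphasizing which axioms are actually invoked, since this is the crux of why the translation works: only the plain fixed-point identity and the uniformity rule are needed, and none of the stronger Conway identities (parameter, dinaturality, diagonal) enter the argument. One could equivalently phrase the proof through the uniform categorical trace on the coproduct-monoidal structure of $\mathsf{C}$, but working directly with the Conway operator keeps both the fixed-point law and the uniformity rule in the precise form demanded by the definition of a coalgebraic trace operator.
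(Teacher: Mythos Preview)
Your proposal is correct and aligns with the paper's approach: the paper does not give an explicit proof but simply remarks that the result (modulo uniformity) is in \cite{BentonHyland}, i.e.\ follows by unfolding the Kleisli composition for $\mathcal{M}_A$ and invoking the Conway fixed-point and uniformity axioms, which is exactly what you carry out in detail. Your careful bookkeeping with $0+A\cong A$ and the explicit reductions $h^\sharp\cdot\alpha=(h+\mathrm{id}_A)\circ\alpha$, $\beta\cdot h^\sharp=\beta\circ h$, and $\text{tr}_\beta\cdot h^\sharp=\beta^\dagger\circ h$ are all correct and make explicit the content the paper leaves implicit.
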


\begin{example}\label{example:NA_trace}
The $\varepsilon$-NA's and their trace semantics fits into the above setting since the $\varepsilon$-NA monad satisfies:
$$\mathcal{P}(\Sigma^{*}\times \mathcal{I}d  +  \Sigma^{*}) \cong \mathcal{P}(\Sigma^{*}\times (\mathcal{I}d +  1)).$$ Hence,  if we put $T=\mathcal{P}(\Sigma^{*}\times \mathcal{I}d)$ to be the free LTS monad then the $\varepsilon$-NA monad is given by $T(\mathcal{I}d +  1)=T\mathcal{M}_1$. Since the free LTS monad $\mathcal{P}(\Sigma^{*}\times \mathcal{I}d) \cong \mathcal{P}(\Sigma^{*})^{\mathcal{I}d}$ is an example of a quantale monad \cite{Jacobs10} on $\mathsf{Set}$ its Kleisli category $\mathcal{K}l( \mathcal{P}(\Sigma^{*}\times \mathcal{I}d) )$ with binary coproducts and initial object is equipped with a uniform Conway operator (or equivalently a uniform categorical trace operator) \cite{Hasegawa,Jacobs10}. Therefore, if we put $\mathsf{C}=\mathcal{K}l(\mathcal{P}(\Sigma^{*}\times \mathcal{I}d))$ then the Kleisli category for the exception monad $\mathcal{M}_1 = \mathcal{I}d  +  1$ defined on $\mathsf{C}$ is isomorphic to the Kleisli category for $\varepsilon$-NA monad, i.e. $\mathcal{K}l(\mathcal{M}_1) \cong \mathcal{K}l(\mathcal{P}(\Sigma^{*}\times \mathcal{I}d +  \Sigma^{*}))$. The analysis of the Conway operator for the Kleisli category for the monad $\mathcal{P}(\Sigma^{*}\times \mathcal{I}d)$ \cite{Jacobs10} leads to a conclusion that $\text{tr}_\alpha$ obtained for $\varepsilon$-NA's via Theorem \ref{theorem:traced_fixed_point} is exactly the least fixed point operator we introduced in the previous subsection.
\end{example}

To conclude, when allowing invisible steps into our setting, i.e. considering coalgebras over monadic types, weak trace semantics becomes a categorical fixed point operator. Moreover, as the above example states, there is a strong connection between coalgebraic trace operator for $\varepsilon$-NA coalgebras and traced monoidal categories. Although traced categories have been studied from coalgebraic perspective in \cite{Jacobs10} they were considered a special instance of the generic coalgebraic trace theory. With Example \ref{example:NA_trace} at hand we believe that it should be the other way around in many cases, i.e. coalgebraic trace semantics for coalgebras with internal moves is a direct consequence of the fact that certain Kleisli categories are traced monoidal categories.

\section{Weak bisimulation and weak trace semantics}\label{section:weak_and_trace}

 We have shown that two behavioural relations, namely, weak bisimulation and weak trace equivalence can be defined using fixed points of certain maps. In case of trace equivalence this map is given by $x\mapsto x\cdot \alpha$, in case of weak bisimulation it is $x\mapsto 1\vee x\cdot \alpha$. We see that both equivalences should be considered individually, as they require different assumptions. Yet, in a restrictive enough setting we should be able to compare these notions at once. Indeed, in the setting of monads whose Kleisli category has hom-sets being complete join semilattices and whose composition preserves all non-empty joins, it is possible for us to talk about three behavioural equivalences at once, namely, weak trace semantics, weak bisimilarity and bisimilarity. In this case we can prove the following. 

\begin{theorem}\label{theorem:weak_trace_strong}
Let $T$ be a monad as above and let $\perp = \perp \cdot f^\sharp \text{ for any } f:X\to Y$ in $\mathsf{C}$. A strong bisimulation on $\alpha:X\to TX$ is also a weak bisimulation on $\alpha$. Moreover, if we define the trace map to be $\text{tr}_\alpha = \mu x. x\cdot\alpha$ then $\text{tr}_\alpha = \text{tr}_{\alpha^{*}}$. In other words, weak bisimilarity implies weak trace equivalence.
\end{theorem}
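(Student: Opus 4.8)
The plan is to handle the three assertions separately, treating the equality $\text{tr}_{\alpha}=\text{tr}_{\alpha^{*}}$ as the technical heart and deriving the two bisimulation statements as short consequences. Throughout I will use that in the present setting $T$ is an order saturation monad whose saturator is the Kleene star $\alpha^{*}=\bigvee_{n}\alpha^{n}$ (with $\alpha^{0}=1$); verifying the axioms (\ref{axiom:1})--(\ref{axioms:4}) for this candidate is routine once one invokes that composition preserves non-empty joins.

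For the first claim, that a strong bisimulation on $\alpha$ is a weak bisimulation, I would start from a kernel bisimulation $R$ on $\alpha$ witnessed by a coalgebra $\gamma\colon Z\multimap Z$ and homomorphisms $f,g\colon X\to Z$, so that $f^{\sharp}\cdot\alpha=\gamma\cdot f^{\sharp}$ and $g^{\sharp}\cdot\alpha=\gamma\cdot g^{\sharp}$. Instantiating axiom (\ref{axioms:4}) with $\mathrel{\Box}$ taken to be both $\leqslant$ and $\geqslant$ upgrades these equalities to $f^{\sharp}\cdot\alpha^{*}=\gamma^{*}\cdot f^{\sharp}$ and $g^{\sharp}\cdot\alpha^{*}=\gamma^{*}\cdot g^{\sharp}$, i.e. $f,g$ are homomorphisms from $\alpha^{*}$ to $\gamma^{*}$. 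As $R$ remains the pullback of $f$ and $g$, it is a kernel bisimulation on $\alpha^{*}$, hence a weak bisimulation on $\alpha$.

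The core step is $\text{tr}_{\alpha}=\text{tr}_{\alpha^{*}}$, which I would prove by two inequalities, reading each $\text{tr}_{\beta}$ as the least fixed point, equivalently the least pre-fixed point (Knaster--Tarski on the complete lattice $Hom_{\mathcal{K}l(T)}(X,0)$), of $x\mapsto x\cdot\beta$. For $\text{tr}_{\alpha^{*}}\leqslant\text{tr}_{\alpha}$ I would check that $\text{tr}_{\alpha}$ is already a fixed point of $x\mapsto x\cdot\alpha^{*}$: iterating $\text{tr}_{\alpha}\cdot\alpha=\text{tr}_{\alpha}$ gives $\text{tr}_{\alpha}\cdot\alpha^{n}=\text{tr}_{\alpha}$ for all $n$, so $\text{tr}_{\alpha}\cdot\alpha^{*}=\bigvee_{n}\text{tr}_{\alpha}\cdot\alpha^{n}=\text{tr}_{\alpha}$, where the distributivity uses that composition preserves the non-empty join $\bigvee_{n}\alpha^{n}$; leastness of $\text{tr}_{\alpha^{*}}$ yields the inequality. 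For the converse I would use $\alpha\leqslant\alpha^{*}$ (axiom (\ref{axiom:2})) and monotonicity to get $\text{tr}_{\alpha^{*}}\cdot\alpha\leqslant\text{tr}_{\alpha^{*}}\cdot\alpha^{*}=\text{tr}_{\alpha^{*}}$, exhibiting $\text{tr}_{\alpha^{*}}$ as a pre-fixed point of $x\mapsto x\cdot\alpha$, so that $\text{tr}_{\alpha}\leqslant\text{tr}_{\alpha^{*}}$. The delicate point, and the one I expect to demand the most care, is the book-keeping between the two distinct fixed-point operators together with the observation that the hypotheses ``complete join semilattice'' and ``composition preserves non-empty joins'' are precisely what make the distributivity step legitimate.

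Finally, for ``weak bisimilarity implies weak trace equivalence'' I would combine the previous parts with uniformity of the trace operator, which holds by Theorem~\ref{theorem:cppo_coalgebraic_trace} since the hypotheses make $\mathcal{K}l(T)$ $\textbf{Cppo}$-enriched and $\perp=\perp\cdot f^{\sharp}$. Given a weak bisimulation $R$ on $\alpha$, i.e. a bisimulation on $\alpha^{*}$ with witnesses $\gamma,f,g$, uniformity applied to the homomorphisms $f,g$ gives $\text{tr}_{\gamma}\cdot f^{\sharp}=\text{tr}_{\alpha^{*}}=\text{tr}_{\gamma}\cdot g^{\sharp}$; rewriting $\text{tr}_{\gamma}\cdot f^{\sharp}=\text{tr}_{\gamma}\circ f$ by the monad laws and using $f\circ\pi_{1}=g\circ\pi_{2}$ from the pullback yields $\text{tr}_{\alpha^{*}}\circ\pi_{1}=\text{tr}_{\alpha^{*}}\circ\pi_{2}$. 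Substituting $\text{tr}_{\alpha}=\text{tr}_{\alpha^{*}}$ from the core step turns this into $\text{tr}_{\alpha}\circ\pi_{1}=\text{tr}_{\alpha}\circ\pi_{2}$, i.e. $R$-related states are weak trace equivalent; this last part I expect to be purely formal.
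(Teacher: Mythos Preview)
Your proposal is correct and follows essentially the same route as the paper: the first claim is dispatched via axiom~(\ref{axioms:4}) of an order saturation monad, and the equality $\text{tr}_\alpha=\text{tr}_{\alpha^{*}}$ is obtained by showing $\text{tr}_\alpha$ is a fixed point of $x\mapsto x\cdot\alpha^{*}$ together with the inequality coming from $\alpha\leqslant\alpha^{*}$. The only minor differences are that for $\text{tr}_\alpha\leqslant\text{tr}_{\alpha^{*}}$ the paper computes directly with the Kleene formula $\bigvee_n\perp\cdot\alpha^n\leqslant\bigvee_n\perp\cdot(\alpha^{*})^n$ rather than your pre-fixed-point argument, and the paper leaves the final ``weak bisimilarity implies weak trace equivalence'' implicit whereas you spell it out via uniformity.
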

 
\section{Summary and future work}\label{section:summary}

This paper shows that coalgebras with internal moves can be understood as coalgebras over a type which is a monad. We believe that such a treatment makes formulation of many different properties and behavioural equivalences simpler. It is natural to suspect that many other types of different behavioural equivalences can be translated into the coalgebraic setting this way. One of these is dynamic bisimulation \cite{MonSass} which should be obtained as a strong bisimulation on $\mu x. (\alpha  \vee  x\cdot \alpha)$ (i.e. a transitive closure of $\alpha$). We believe that this paper may serve as a starting point for a larger project to translate some of the equivalences from van Glabbeek's spectrum of different equivalences for state-based systems with silent labels \cite{Glabbeek,Sangiorgi11} into the setting of coalgebras with internal activities.

Finally, as mentioned in Section \ref{section:trace_semantics} we should aim at extending the coalgebraic trace semantics theory for systems without internal transitions  \cite{HasJacSok} to systems with silent moves. Uniform fixed point operator could serve as such an extension. Moreover, we should build a more traced monoidal category oriented theory of coalgebraic traces and refer it to known results for generic coalgebraic trace. 

\paragraph{Acknowledgements} I would like to thank Alexandra Silva for inspiring me with the literature on categorical fixed points. I am also very grateful to anonymous referees for various comments and remarks that hopefully made this work more interesting and easier to follow.

\newpage

\section*{Appendix}

\subsubsection{Section \ref{section:LTS_coalgebraically}}

\begin{proof}[Lemma \ref{lemma:stars}] Consider $\alpha:X\multimap \overline{F_\tau} X$. Then $\underline{\alpha}:X\multimap \overline{F}^{*}X$. Moreover, the following diagram commutes in $\mathcal{K}l(T)$:
$$
\xymatrix@-1pc{
\overline{F}^{*}X \ar@{-o}[d]_{h_X}\ar@{-o}[r]^{\overline{F}^{*}\underline{\alpha}} & \overline{F}^{*} \overline{F}^{*} X\ar@{-o}[d]^{h_{\overline{F}^{*}X}} \ar@{-o}[r]^{m_X} & \overline{F}^{*} X \ar@{-o}[dd]^{h_X}\\
\overline{F_\tau}X\ar@{-o}[r]^{\overline{F}_\tau \underline{\alpha}} \ar@{-o}[dr]_{\overline{F_\tau}\alpha}& \overline{F_\tau}\overline{F}^{*}X \ar@{-o}[d]^{\overline{F_\tau}h_{X}} \\
& \overline{F_\tau}\overline{F_\tau}X\ar@{-o}[r]_{m'_X} & \overline{F_\tau}X 
}
$$
By our assumption it follows that $h_X\cdot (m_X\cdot \overline{F}^{*}\underline{\alpha})^{*} = (m'_X\cdot \overline{F_\tau}\underline{\alpha})^{*}\cdot h_X$. Hence,
\begin{align*}
&h_X\cdot  \underline{\alpha}^\divideontimes =  h_X\cdot (m_X\cdot \overline{F}^{*}\underline{\alpha})^{*}\cdot e_X = \\ &(m'_X\cdot \overline{F_\tau}{\alpha})^{*}\cdot h_X\cdot e'_X = (m'_X\cdot \overline{F_\tau}{\alpha})^{*}\cdot e'_X = \alpha^{\bigstar}.
\end{align*}
\end{proof}

\begin{proof}[Theorem \ref{theorem:bisimulation_sat_1}]
Let $X\stackrel{\pi_1}{\leftarrow} R\stackrel{\pi_2}{\to}X$ be a bisimulation on $\underline{\alpha}^\divideontimes$. This means there is $\gamma:Z\to TF^{*}Z$ and two homomorphisms $f$ and $g$ in $\mathsf{C}_{TF^{*}}$ between $\underline{\alpha}^\divideontimes$ and $\gamma$ such that $R$ with $\pi_1$ and $\pi_2$ is a pullback of a suitable diagram. Since $TF^{*}$ is order saturation monad the morphisms $f$ and $g$ are also homomorphisms between $\underline{\alpha}^\divideontimes$ and $\gamma^\divideontimes$. Since the monad morphism $h$ is a natural transformation it follows that $f$ and $g$ are also homomorphisms between $\alpha^\bigstar = h_X\cdot \underline{\alpha}^\divideontimes$ and $h_Z\cdot \gamma^\divideontimes$ which completes the proof.
\end{proof}

\begin{proof}[Lemma \ref{lemma:two_stars_lemma}]
This follows directly from the fact that $TF_\tau$ and $TF^{*}$ are order saturation monads with saturation operators given by $(-)^\bigstar$ and $(-)^\divideontimes$ respectively. Indeed, this implies that $\alpha \leqslant \alpha^\bigstar$. Hence, $\underline{\alpha}\leqslant \underline{\alpha^\bigstar}$ and $\underline{\alpha}^\divideontimes \leqslant \underline{\alpha^\bigstar}^{\divideontimes}$.
\end{proof}

\begin{proof}[Lemma \ref{lemma:sub_commute}]
Consider $\alpha:X\multimap \overline{F_\tau}X$ and assume that $$[\nu_X,e_X]\cdot m'_X\cdot \overline{F_\tau} \alpha \leqslant  m_X\cdot \overline{F}^{*} \underline{\alpha} \cdot [\nu_X,e_X].$$ 
Since $[\nu_X,e_X] = [\nu'_X,\varepsilon_X]^\sharp$, where $\varepsilon:\mathcal{I}d\implies F^{*}$ is the unit of the free monad $F^{*}$ in $\mathsf{C}$ and $\nu'_X:F\implies F^{*}$ arises by freeness of $F^{*}$ in $\mathsf{C}$, by property (\ref{axioms:4}) of the definition of ordered saturation monad we infer that $[\nu_X,e_X]\cdot (m'_X\cdot \overline{F_\tau} \alpha)^{*} \leqslant  (m_X\cdot \overline{F}^{*} \underline{\alpha})^{*} \cdot [\nu_X,e_X]$. Hence, 
\begin{align*}
&\underline{\alpha^\bigstar}=[\nu_X,e_X]\cdot (m_X'\cdot \overline{F_\tau} \alpha)^{*} \cdot e'_X \leqslant  (m_X\cdot \overline{F}^{*} \underline{\alpha})^{*} \cdot [\nu_X,e_X]\cdot e'_X  =\\
& (m_X\cdot \overline{F}^{*} \underline{\alpha})^{*} \cdot e_X = \underline{\alpha}^\divideontimes. 
\end{align*}
\end{proof}

\begin{proof}[Theorem \ref{cor:weak_bisims}]
Our aim is to prove the inequality from the previous theorem. By our assumptions we have:
$$
\xymatrix{
\overline{F}X \ar@{-o}[r]^{0} \ar@{-o}[d]_{\nu_{X} } & X\ar@{-o}[dl]^{e_X}_{\geq} \\
\overline{F}^{*}X & 
}
 \xymatrix{
\overline{F}X +  X \ar@{-o}[r]^{[0,id]} \ar@{-o}[d]_{[\nu_{X},e_X] } & X\ar@{-o}[dl]^{e_X}_{\geq} \\
\overline{F}^{*}X & 
}
 \xymatrix{
\overline{F}\overline{F_\tau}X \ar@{-o}[r]^{\overline{F}[0,id]} \ar@{-o}[d]_{\overline{F}[\nu_{X},e_X] } & \overline{F}X\ar@{-o}[dl]^{\overline{F}e_X}_{\geq} \\
\overline{FF^{*}}X & 
}
$$
Hence,
$$
\xymatrix{
\overline{F}\overline{F_\tau}X \ar@{-o}[r]^{\overline{F}[0,id]} \ar@{-o}[d]_{\overline{F}[\nu_X,e_X]} 
& \overline{F} X\ar@{-o}[dl]^{\overline{F}e_X}_\geq \ar@{-o}[d]^{\nu_{X}}\\
\overline{FF^{*}}X \ar@{-o}[d]_{\nu_{\overline{F}^{*}X}}& \overline{F}^{*}X\ar@{-o}[dl]^{\overline{F}^{*}e_X} \\
\overline{F^{*} F^{*}}X \ar@{-o}[r]_{m_X} & \overline{F}^{*}X\ar@{=}[u]
}
\text{ and }
\xymatrix{
\overline{F_\tau}X \ar@{=}[r] \ar@{-o}[d]_{[\nu_X,e_X]} 
& \overline{F_\tau} X \ar@{-o}[dd]^{[\nu_{X},e_X]}\\
\overline{F}^{*}X \ar@{-o}[d]_{e_{\overline{F}^{*}X}}&  \\
\overline{F^{*} F^{*}}X \ar@{-o}[r]_{m_X} & \overline{F}^{*}X\ar@{=}[lu]
}
$$
Since cotupling preserves order the rightmost rectangle in the following diagram op-lax commutes:
$$
\xymatrix{
&\overline{F_\tau}\overline{F_\tau}X \ar@{-o}[r]^{[\overline{F}[0,id], id]}_{m'_X} \ar@{}[ddr]|\geq \ar@{-o}[d]|{\overline{F_\tau}[\nu_{X},e_X]} & \overline{F}_\tau X \ar@{-o}[dd]^{[\nu_{X},e_{X}]}\\
\overline{F_\tau}X\ar@{-o}[d]_{[\nu_{X},e_{X}]}\ar@{-o}[r]|{\overline{F_\tau}\underline{\alpha}}\ar@{-o}[ur]^{\overline{F_\tau}\alpha} &\overline{F_\tau F^{*}}X \ar@{-o}[d]|{[\nu_{\overline{F}^{*}X},e_{\overline{F}^{*}X}]} \\
\overline{F}^{*}X\ar@{-o}[r]_{\overline{F}^{*}\underline{\alpha}} & \overline{F^{*} F^{*}}X \ar@{-o}[r]_{m_X} & \overline{F}^{*}X
}
$$
Hence, for any $\alpha:X\multimap \overline{F_\tau}X$ we have:
$$[\nu_X,e_X]\cdot m_X'\cdot \overline{F_\tau} \alpha \leqslant  m_X\cdot \overline{F}^{*} \underline{\alpha} \cdot [\nu_X,e_X].$$
This together with Theorem \ref{theorem:star_star_bisimulation} proves the assertion.
\end{proof}

\subsubsection{Section \ref{section:weak_and_trace}}

\begin{proof}[Theorem \ref{theorem:weak_trace_strong}]
We only need to show  the 2nd part of the statement as the first follows directly by the fact that $T$ is an order saturation monad.  The inequality $\text{tr}_\alpha\leqslant \text{tr}_\alpha^{*}$ holds by the fact that $\alpha \leqslant \alpha^{*}$ and that for any $\alpha\leqslant \beta$ we have $\text{tr}_\alpha\leqslant \text{tr}_\beta$. This follows directly by  $$\text{tr}_\alpha =\bigvee_n \perp \cdot \alpha^{n} \leqslant \bigvee_n \perp\cdot \beta^n = \text{tr}_\beta.$$ 
To see the inverse inequality is true note that 
$$\text{tr}_\alpha \cdot \alpha^{*} = (\bigvee_{n} \perp\cdot \alpha^n)\cdot \bigvee_{m=0,1,\ldots} \alpha^m = \bigvee_n \bigvee_{m=0,1,\ldots}  \perp \cdot \alpha^n\cdot \alpha^m= \text{tr}_\alpha .$$
Hence, $\text{tr}_\alpha$ is a fixed point of $x\mapsto x\cdot \alpha^{*}$. By the fact that $\text{tr}_{\alpha^{*}}$ is the least such element we get that $\text{tr}_{\alpha^{*}}\leqslant \text{tr}_\alpha$.
\end{proof}


\begin{thebibliography}{99}


\bibitem{AbraJung} Abramsky, S., Jung, A.: Domain Theory. Handbook of Logic in Computer Science (1994), pp. 1-168.

\bibitem{AczMen}  Aczel, P.,  Mendler, N.: A final coalgebra theorem. Proc. CTCS 1989, Lecture
Notes in Computer Science 389 (1989) pp. 357-365.

\bibitem{BaierHermanns} Baier, Ch., Hermanns H.: Weak bisimulation for fully probabilistic processes.  Proc. CAV 1997, Lecture Notes in Computer Science 1254 (1997), pp. 119-130. 

\bibitem {Beck} Beck, J.: Distributive laws. Lecture Notes in Mathematics 80  (1969), pp. 119-140.
\bibitem{BentonHyland} Benton, N., Hyland, M.: Traced premonoidal categories. Theoretical Informatics and Applications 37 (4) (2003), pp. 273-299.
\bibitem{BloomEsik} Bloom, S. L., \'Esik Z.: Iteration Theories. The Equational Logic of Iterative Processes. Springer (1993).
\bibitem{Brengos12} Brengos, T.: Weak bisimulations for coalgebras over ordered functors. Proc. IFIP TCS 2012, Lecture Notes in Computer Science 7604 (2012), pp. 87-103

\bibitem{Brengos13} Brengos, T.: Weak bisimulations for coalgebras over ordered monads.  CoRR abs/1310.3656 (2013) (submitted)

\bibitem{Barr} Barr, M.: Coequalizers and free triples. Math. Z. 116 (1970) 307-322.


\bibitem{GhU} Ghani, N., Uustalu, T.: Coproducts of ideal monads. RAIRO - Theoretical Informatics and Applications 38 (2004),pp. 321-342.

\bibitem{Glabbeek} Glabbeek van, R. J.: The Linear Time-Branching Time Spectrum II - The semantics of sequential systems with silent moves. Lecture Notes in Computer Science 715 (1993),  pp 66-81. 

\bibitem{Gov} Goncharov, S.: Kleene monads. Ph.D. thesis (2010). 

\bibitem{GonPat} Goncharov, S., Pattinson, D.: Weak Bisimulation for Monad-Type
Coalgebras. (2013) Slides.  \url{http://www8.cs.fau.de/~sergey/talks/weak-talk.pdf}


\bibitem{GummEl} Gumm, H.P.: Elements of the general theory of
    coalgebras. LUATCS 99, Rand Afrikaans University,
    Johannesburg (1999)

\bibitem{Hasegawa} Hasegawa, M. : Models of Sharing Graphs. A Categorical Semantics of let and letrec. Ph. D. thesis. University of Edinburgh (1997)

\bibitem{HasJacSok_jap}
Hasuo, I., Jacobs, B., Sokolova, A.: Generic Forward and Backward Simulations. (Partly in Japanese) Proc. JSSST Annual Meeting (2006).

\bibitem{HasJacSok} Hasuo, I., Jacobs, B., Sokolova, A.: Generic Trace Semantics via Coinduction. Logical Methods in Computer Science 3 (4:11) (2007) pp. 1-37.


\bibitem{HopUll} Hopcroft, J.E., Motwani, R., 	Ullman, J. D.: 
Introduction to Automata Theory, Languages, and Computation. 3rd Edition. Prentice Hall (2006)  
    
\bibitem{JSV} Joyal, A., Street, R., 	Verity, D.: Traced monoidal categories.  Mathematical Proceedings of the Cambridge Philosophical Society 3 (1996) pp. 447- 468.
\bibitem{Jacobs08}
Jacobs, B.: Coalgebraic trace semantics for combined possibilitistic and probabilistic systems. Electronic Notes in Theoretical Computer Science 203(5) (2008), pp. 131-152.

\bibitem{Jacobs10}
Jacobs, B.: From Coalgebraic to Monoidal Traces. Electronic Notes in Theoretical Computer Science 264(2) (2010) pp. 125-140.

\bibitem{JacSilSok}
Jacobs, B., Silva, A., Sokolova, A.: Trace Semantics via Determinization. Lecture Notes in Computer Science 7399 (2012) pp. 109-129.

\bibitem{Kock}
Kock, A.:  Strong functors and monoidal monads. Archiv der Mathematik 23(1) (1972) pp. 113-120.

\bibitem{MacLane}
Mac Lane S.: Categories for working mathematician.  Springer 2nd edition (1998)

\bibitem{Manes} Manes, E.: Algebraic Theories. Springer 1st edition (1976)

\bibitem{ManMul} Manes, E., Mulry, Ph.: Monad Compositions I: General constructions and recursive distributive laws. Theory and Applications of Categories 18 (7) (2007), pp. 172-208.


\bibitem{MicPer} Miculan, M., Peressotti, M.: Weak bisimulations for labelled transition systems weighted over semirings. CoRR abs/1310.4106 (2013)

\bibitem{Milius} Milius, S.: On Iteratable Endofunctors. Electronic Notes in Theoretical Computer Science 69 (2003)


\bibitem{Milner} Milner, R.: A Calculus of Communicating Systems. Lecture Notes in Computer Science 92 (1980)


\bibitem{Milner3} Milner, R.: Communication and Concurrency. Prentice Hall (1989)

\bibitem{MonSass} Montanari, U., Sassone, V.: Dynamic congruence vs.
progressing bisimulation for $CCS^{*}$. Fundamenta Informaticae 16 (2) (1992), pp. 171-199. 


\bibitem{Rothe}
Rothe, J.: A syntactical approach to weak (bi)-simulation for coalgebras. In:  Proc. CMCS’02. Electronic Notes in Theoretical Computer Science 65 (2002), pp. 270-285.

\bibitem{RotheMas}
Rothe, J., Masulovi\'c D.: Towards weak bisimulation for coalgebras. :
Proc. Categorical Methods for Concurrency, Interaction and Mobility. Electronic Notes in Theoretical Computer Science 68(1),  (2002), pp. 32-46.

\bibitem{Rutt2000}
Rutten, J.: Universal coalgebra: a theory of systems. Theoretical Computer Science 249 (2000), pp. 3–80

\bibitem{Rutten}
Rutten, J.: A note on coinduction and weak bisimilarity for while programs. RAIRO - Theoretical Informatics and Applications 33 (4-5) (1999), pp. 393-400

\bibitem{Sangiorgi11} 
Sangiorgi, D.: Introduction to Bisimulation and Coinduction. Cambridge University Press (2011)



\bibitem{Segala} 
Segala, R., Lynch, N.: Probabilistic simulations for probabilistic processes. Proc. CONCUR’94, Lecture Notes in Computer Science 836 (1994) pp. 481 - 496. 

\bibitem{SegalaThesis}
Segala, R.: Modeling and verification of randomized distributed real-time systems. Ph.D. thesis, MIT (1995)


\bibitem{SilWester} Silva, A., Westerbaan, B.: A Coalgebraic View of epsilon-Transitions. Lecture Notes in Computer Science Volume 8089 (2013), pp. 267-281.

\bibitem{SilBonBonRut}
Silva, A., Bonchi, F., Bonsangue, M.,  Rutten, J. :Generalizing the powerset construction,
coalgebraically. Proc. FSTTCS 2010, LIPIcs 8 (2010), pp. 272-283.


\bibitem{SimPlot} Simpson, A., Plotkin, G.: Complete Axioms for Categorical Fixed-point Operators. Proc. 15th Annual Symposium on Logic in Computer Science (2000), pp. 30-41.

\bibitem{SokViWor}
Sokolova, A., de Vink, E., Woracek, H:
Coalgebraic Weak Bisimulation for Action-Type Systems.
 Sci. Ann. Comp. Sci. 19 (2009), pp. 93-144.

\bibitem{Staton}
Staton, S.: Relating coalgebraic notions of bisimulation. Logical Methods in Computer Science 7 (1:13) (2011), pp. 1-21.


\end{thebibliography}
\end{document}